\documentclass[twoside,leqno]{article}

\usepackage[letterpaper]{geometry}

\usepackage{siamproceedings}

\usepackage[T1]{fontenc}
\usepackage{graphicx}
\usepackage{epstopdf}
\usepackage{enumitem}
\usepackage{algorithmic}
\ifpdf
  \DeclareGraphicsExtensions{.eps,.pdf,.png,.jpg}
\else
  \DeclareGraphicsExtensions{.eps}
\fi

\usepackage{amsmath,amssymb,euscript,enumitem,amsfonts}
\usepackage{tcolorbox}
\usepackage[mathlines]{lineno}
\linenumbers
\usepackage{thmtools} 
\usepackage{thm-restate}

\usepackage{abstract}
\renewcommand{\abstractname}{} 

\usepackage{graphicx}
\usepackage{tikz}
\usetikzlibrary{arrows.meta}
\usetikzlibrary{calc}
\usetikzlibrary{decorations.pathmorphing}
\usetikzlibrary{positioning}
\usepackage{pgfplots}
\pgfplotsset{compat=1.18}
\usepackage{subcaption}

\usepackage{hyperref}
\usepackage{xcolor}
\definecolor{darkgreen}{RGB}{0,100,0}
\definecolor{darkred}{RGB}{139,0,0}

\newsiamremark{remark}{Remark}
\newsiamremark{observation}{Observation}
\newsiamthm{claim}{Claim}

\newcommand{\M}{\mathcal{M}}

\newcommand{\symm}{\operatorname{sym}}
\newcommand{\Gsymm}{\operatorname{gsym}}
\newcommand{\foc}{\operatorname{foc}}

\newcommand{\Alpha}{A}
\newcommand{\Beta}{B}

\newcommand{\Skip}[1]      {}

\def\barroman#1{\sbox0{#1}\dimen0=\dimexpr\wd0+1pt\relax
  \makebox[\dimen0]{\rlap{\vrule width\dimen0 height 0.06ex depth 0.06ex}%
    \rlap{\vrule width\dimen0 height\dimexpr\ht0+0.03ex\relax 
            depth\dimexpr-\ht0+0.09ex\relax}%
    \kern.5pt#1\kern.5pt}}

\title{The Singular Source of Vineyard Monodromy}

\author{Erin W.~Chambers\thanks{Department of Computer Science and Engineering, University of Notre Dame, Notre Dame, IN, USA (\email{echambe2@nd.edu},  \url{https://wolfchambers.github.io/}, \url{https://orcid.org/0000-0001-8333-3676}).}
\and Christopher Fillmore\thanks{Institute of Science and Technology Austria, Klosterneuburg, Austria (\email{cdfillmore@gmail.com},  \url{https://orcid.org/0000-0001-7631-2885}).}   
\and Shankha Shubhra Mukherjee \thanks{ Department of Computer Science and Engineering, University of Notre Dame, Notre Dame, IN, USA (\email{smukher4@nd.edu},  \url{https://shankhasm.github.io}, \url{https://orcid.org/0009-0009-6779-059X}).}
\and Rohit Roy \thanks{Inria Centre Universit{\'e} C{\^o}te d'Azur, Sophia-Antipolis, France
(\email{rohit.a.roy@inria.fr}, \url{https://rohitroy.me}, \url{https://orcid.org/0009-0002-1927-2172}).}
\and Elizabeth Stephenson\thanks{Orteliu, Oslo, Norway
(\email{elizasteprene@gmail.com}, \url{https://orcid.org/0000-0002-6862-208X}).}
\and Mathijs Wintraecken\thanks{Inria Centre Universit{\'e} C{\^o}te d'Azur, Sophia-Antipolis, France
(\email{mathijs.wintraecken@inria.fr}, \url{https://orcid.org/0000-0002-7472-2220}).}}

\begin{document}
\maketitle
\begin{abstract}


Vineyards, or time-varying families of persistence diagrams, are widely used in topological data analysis (TDA) pipelines to track how topological features change and evolve as a parameter varies.  When the parameter traces a closed loop, a vineyard can exhibit monodromy: diagram points permute over the course of a full traversal, which obstructs feature tracking and can complicate downstream analysis of such data.
Chambers et al. considered the periodic vineyards that arise from the radial persistence transform, which maps the manifold to a family of persistence diagrams, where each diagram fixes a base point and considers the filtration that is based on Euclidean distance to that point, and showed that monodromy and knotting can occur. Other recent work by Arya et al. considers geometric conditions that exclude monodromy in two dimensions, in an effort to better understand when this effect happens.
That said, understanding when and why monodromy occurs is a fundamental open problem with direct practical consequences for many data analysis pipelines.  
In this work, we study this question for 1-manifolds in $\mathbb{R}^2$, using a surprising connection with tools from singularity theory, and provide a classification for the causes of monodromy in vineyards.  More precisely, we prove that the vineyard of a sufficiently small loop $\gamma$ cannot exhibit monodromy unless it contains a specific singularity of the distance function.  The central geometric object in our analysis is the symmetry set, which is the locus of centers of spheres tangent in more than one point to the manifold; this object classifies singularities of the distance function, and in our setting, dictates precisely when monodromy occurs.  This characterization opens the door to the development of algorithmic criteria for detecting and utilizing (or avoiding) monodromy in TDA pipelines.

\textbf{Keywords:} Symmetry set, persistent homology, vineyards


\end{abstract}

\section{Introduction}
\label{sec:Introduction}

Topological Data Analysis (TDA) encompasses a wide range of tools that compute topological invariants of geometric objects or spaces using tools from algebraic topology. Persistent homology in particular is perhaps the most well-known example in TDA, yielding a computable and stable invariant known as the persistence diagram, which has seen wide utility in data analysis in a broad range of applications.  A full review of such applications is beyond our scope, but we refer the reader to~\cite{DONUT} for a list of further examples; see also recent books on this topic~\cite{Boissonnat2018,Dey2022}.

Time-series data yields a family of stacked persistence diagrams known as vineyards~\cite{CohenSteiner2006, turner2023representing}, which have been useful in many applications \cite{Bergomi2020, DeeAlgar2021, Yoo2016}. 
Because of the stability of persistence diagrams~\cite{CohenSteinerStability}, the points in the stacked persistence diagrams move continuously over time. This means that we can follow a point in (the stack of) the persistence diagrams; the resulting curve is called a vine.  Vineyards can be computed via a modified version of the standard persistence algorithm~\cite{CohenSteiner2006}.
Here we will call the family or `circular stack' of persistence diagrams of a continuous periodic family of filtrations a closed vineyard~\cite{Burghelea2013}. 

\subsubsection*{Monodromy and topology in vineyards:} Recently, theoretical interest in vineyards has expanded in several exciting ways. In \cite{Arya2024}, an example of a closed vineyard with monodromy was exhibited, and the authors proved that the persistent
homology transform of a star-shaped object in the plane cannot exhibit monodromy, giving the first hints of connecting the geometry of a space to its vineyard. 
In the future work section, they note 
a number of potential generalizations of this result, especially to higher dimensions, but were unable to characterize obstructions to monodromy using their approach for two dimensions. 
We note that monodromy in multipersistence was already shown in \cite{Cerri2013}, see also \cite{AATRNsara}, but not much was understood about its causes or obstructions.

Following this work, in~\cite{Chambers2026} circular vineyards were shown to be as topologically complicated as one could hope or expect. In particular, given a knot (or even a link), a manifold $\M$ and a periodic family of distance functions restricted to the manifold can be constructed such that the closed vineyard contains the given knot (or link). 
The distance function in the construction is the squared Euclidean distance to the point $\gamma(t)$ on a closed curve $\gamma$, $d_\mathbb{E}^2  (\cdot , \gamma(t)) |_\M =(d_\mathbb{E}  (\cdot , \gamma(t)) )^2 |_\M$. We denote the resulting vineyard based on this loop by $\mathcal{V}(\mathcal{M},\gamma)$. 
Because recognizing knots is related to many NP-hard problems \cite{Ichihara2023,hass1997algorithms, Lackenby2021, Koenig2021}, this means that closed vineyards are also computationally complex, in the sense that comparing them is hard.  This potential obstruction is perhaps especially surprising given the practical utility of such vineyards in a number of applied settings.

\subsubsection*{The persistent homology transform:} 
Persistence-based transforms are a relatively new development in TDA, but are of growing interest based on their strength both in theory and practice.  For example, it is now well known that the directional transform, where each direction $\omega$ gives rise to a sublevel set filtration and hence a persistence diagram, is injective~\cite{Curry2022,Ghrist2018}. In order to completely determine a piecewise linear shape, an exponential number of directions is sufficient for most settings~\cite{Curry2022}, but no precise lower bound is known outside of some special settings~\cite{Belton2020,Hofer2019,Fasy2025,Chambers2026-counting}.  One interesting expansion of the original directional transform uses extended persistence rather than traditional persistence, as we will utilize in this paper; see Section \ref{ssec:persistence} for a more detailed definition and comparison.  This variant has shown considerable promise and even provides better results for some applications~\cite{Turner2024}.
Later generalizations of directional transforms include the radial transform, which we will utilize, as well as the distance-to-a-flat filtration~\cite{Shoving}, which are also injective and in some sense practically more powerful and often faster to compute than the directional transform in terms of the number of filtrations needed. 

However, a completely geometric characterization of the set of critical directions in any setting is unknown. While it is well understood that monodromy in general is connected to these singularities, which are known as the discriminant locus in the algebraic setting~\cite{salter2023stratifiedbraidgroupsmonodromy,Salter2024}, to the best of our knowledge, there is no work exploring these structures in the context of persistence.  In the context of catastrophe theory, this is called the bifurcation set, see e.g. \cite{Arnold2012Ch3, Bruce1985symmetry}. Notably, this is the nomenclature that Bruce, Giblin and Gibson used \cite{Bruce1985symmetry}, and that we will also adopt for the remainder of the paper.
Our work in this paper is motivated by the fact that there is not yet a concrete understanding of the singularities of the transforms or how the geometry of a shape gives rise to monodromy in the vineyard of a transform.

\subsubsection*{Our contribution:} 
This paper explores the local geometric origins of monodromy in closed vineyards from a different perspective, given a more complete understanding of the bifurcation set and the implications of singularity theory when applied to our setting. More precisely, we will exhibit that only very specific singularities of the symmetry set (and focal set) generate non-trivial topological changes in the vineyards. 

\begin{figure}[th!] 
    \centering
    \includegraphics[width=0.9\linewidth]{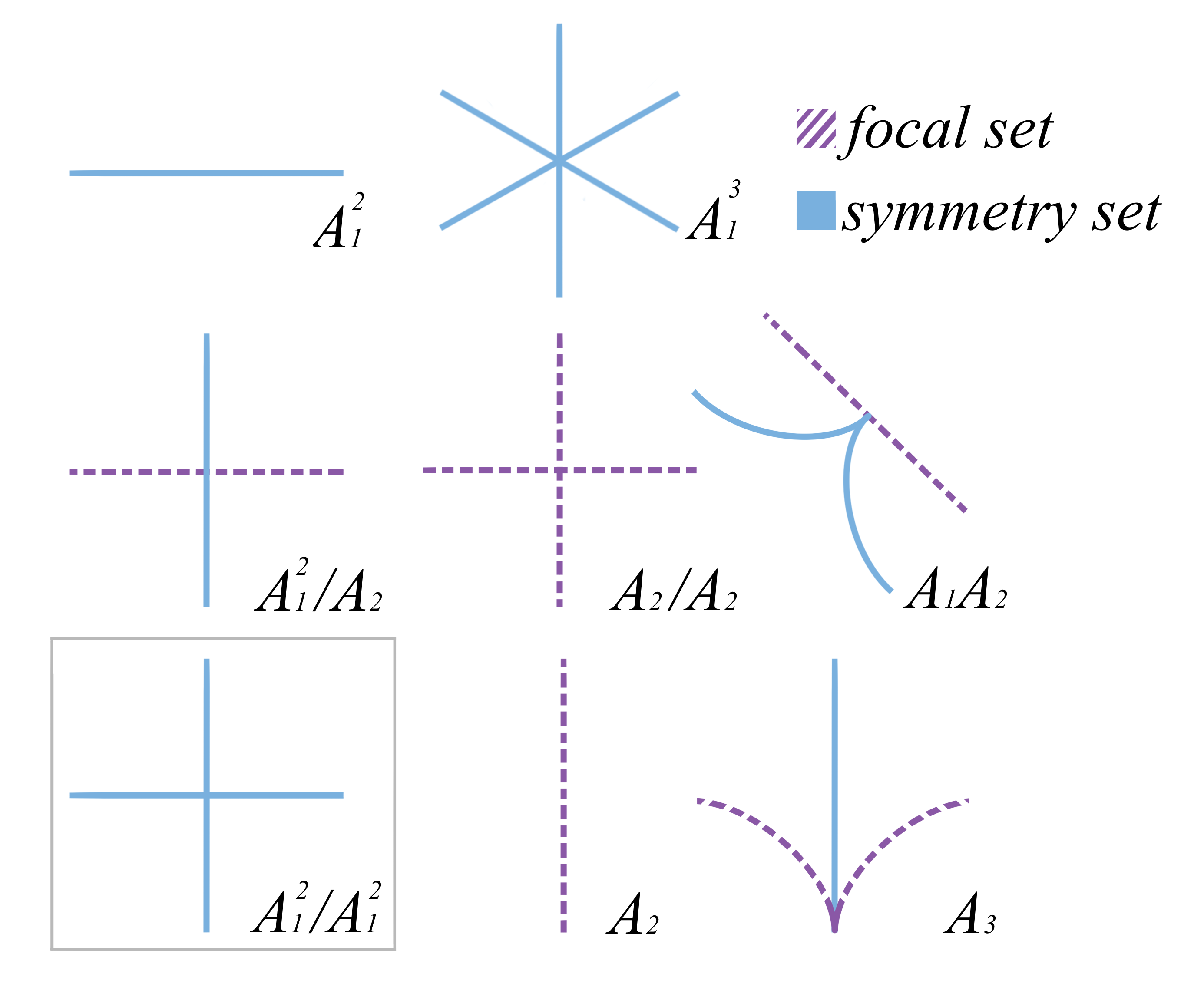}
    \caption{The singularities of the symmetry and focal set in the plane. We adopt Arnold's notation for these singularities. The lower left boxed singularity is the only one which can exhibit local monodromy. 
    \label{fig:classification2DEarly} } 
\end{figure}

We recall that given a set $\mathcal{S} \subset \mathbb{R}^d$, the medial axis is defined as the set of points in $\mathbb{R}^d$, such that the ball centred at those points intersects $\mathcal{S}$ in multiple points and the intersection of the interior with $\mathcal{S}$ is empty. The symmetry set \cite{Bruce1985symmetry} of a manifold $\mathcal{M}$ is the set of centres of spheres that are tangent to $\mathcal{M}$ in multiple points; hence, the symmetry set contains the medial axis. Finally, the focal set is the (image of the) set of points where the natural map from the normal bundle of a submanifold to Euclidean space is not an immersion.  If we take the union of these objects, we obtain the generalized symmetry set, which classifies all singularities of the distance function to the manifold in the ambient space.
The singularities in our setting will consist of the points where the generalized symmetry set is not manifold. These generic singularities have been classified in the plane \cite{Bruce1985symmetry} and the different classes are denoted using Arnold's notation ($A_1^2$, $A_1^2/A_1^2$, $A_1^3$, $A_2$, $A_1A_2$, $A_2/A_2$, $A_1^2/A_2$, $A_3$). The classification is depicted in Figure \ref{fig:classification2DEarly};
see also Section \ref{sec:Singularities_Plane} for a description of the classification in detail.

With this notation, we can state the main result of this paper: 
\begin{restatable}[$A_1^2/A_1^2$ is the unique local planar monodromy generator]{theorem}{mainthm}
\label{thm:mainthm}
Let $\mathcal{M}\subset\mathbb{R}^2$ be a generic smooth closed curve and let $\gamma:S^1\to\mathbb{R}^2$ be a generic sufficiently small loop.
If the interior of $\gamma$ contains no singularity of type $A_1^2/A_1^2$ of the symmetry set, then the vineyard
$\mathcal{V}(\mathcal{M},\gamma)$ has no nontrivial monodromy, and the vineyard consists of $k$ unlinked circles. 
\end{restatable}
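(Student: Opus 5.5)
The argument is local. Because $\gamma$ is small and generic, its interior meets the generalized symmetry set of $\mathcal{M}$ in a small neighbourhood $U$ of at most one point. By genericity, this point is either a regular point of one stratum or one of the singularities of the classification in Figure~\ref{fig:classification2DEarly}. For a base point $p$, the function $f_p=d_\mathbb{E}^2(\cdot,p)|_\M$ is Morse exactly when $p$ lies off the focal set. Its persistence diagram (extended persistence of the curve $\M$) is built from pairings of critical values. Two diagram points can only be exchanged if the vines interact, and this happens when two critical values coincide (so $p$ lies on the symmetry set) or when critical points are born or die (so $p$ lies on the focal set). Away from the generalized symmetry set the diagram varies continuously and without degeneracy, so every vine is locally a trivially embedded arc. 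Monodromy of the closed vineyard over $\gamma$ is therefore determined by the permutation, together with the braiding, induced by going once around $U$.

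Step one is to treat the strata one at a time. I would write down local normal forms from singularity theory (Bruce--Giblin--Gibson, Section~\ref{sec:Singularities_Plane}) for $f_p$ near each singularity type: $A_1^2$, $A_1^3$, $A_2$, $A_1A_2$, $A_2/A_2$, $A_1^2/A_2$, $A_3$ and $A_1^2/A_1^2$. For every type except $A_1^2/A_1^2$, I would list the critical values involved (at most three, plus those born in folds) and follow the pairing rules of the persistence algorithm, in the form of the vineyard transposition rules of Cohen-Steiner et al., around a small circle. The key observation is this. A crossing of a single smooth $A_1^2$ branch swaps two equal-value critical points and then swaps them back when the loop recrosses that branch. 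A fold ($A_2$) creates and destroys a short-lived pair near the diagonal; since the loop crosses the fold curve twice, the pair disappears again. For the remaining codimension-two types, each $A_1^2$ or $A_2$ branch through the point is crossed an even number of times, and the resulting transposition sequence multiplies to the identity. In the $A_1^3$ case, the three pairwise-equal branches bound a configuration in which the total effect is a product of transpositions that cancel, because the three critical values are compared cyclically but their order is restored.

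Step two is to show why the pairings themselves, and not just the orderings of critical values, return to the start. Ordering returning does not by itself imply the pairing does: transposition rules can switch pairings. I would argue that a pairing switch requires two \emph{independent} simultaneous coincidences among critical values that are also persistence partners, namely a death value equal to another death value while the corresponding birth values are also equal. This is exactly the $A_1^2/A_1^2$ configuration, which the hypothesis excludes. In every other type, only one coincidence of critical values is involved, or the coincidences share a critical point. In those cases the Cohen-Steiner switch rules are reversible along the loop, so the pairing is restored. Once no permutation occurs, each vine closes up on itself. Because the loop is small and the diagram is nearly constant on $\gamma$ away from short-lived fold pairs, the $k$ vines are small perturbations of $k$ disjoint horizontal circles in $\text{diagram}\times S^1$. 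They are therefore unlinked; the short-lived fold vines bound discs near the diagonal and are not closed curves.

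I expect the main obstacle to be step two for $A_1^3$ and $A_1^2/A_2$. There, three critical values are simultaneously close, and the transposition case analysis (which pair switches, and whether the involved simplices are positive or negative) must be carried out carefully for each pairing combination. My first attempt would be to reduce extended persistence of a circle to the combinatorics of alternating maxima and minima on $S^1$. I would then check by direct enumeration that the induced map on pairings is the identity around the loop.
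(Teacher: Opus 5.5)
\textbf{The gap is in your step two, the heart of the argument.} Your criterion says that a pairing switch requires a simultaneous death--death and birth--birth coincidence, and that otherwise ``the switch rules are reversible along the loop''. The first half is false.

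By the elder rule, the pairing can change at a transverse crossing of a \emph{single} $A_1^2$ stratum (a knee). In the paper's $A_1^2/A_1^2$ example (Section~\ref{sec:A12A12}), the two re-pairings occur at the $Q_\Beta\to Q_\Gamma$ transition (a death--death crossing) and at the $Q_\Gamma\to Q_\Delta$ transition (a birth--birth crossing). Each involves only one coincidence.

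At an $A_3$ point the symmetry set \emph{ends}, so a small loop crosses it exactly once. This also contradicts your step-one claim that every branch is crossed an even number of times. At that crossing, the long-lived vine's birth (or death) moves from the old extremum to the one created at the first fold crossing. No recrossing undoes this; what restores the picture is the annihilation of the other, now short-lived, pair at the second fold crossing.

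So ``one coincidence, hence reversible'' is neither true as a principle nor argued. What distinguishes $A_1^2/A_1^2$ is not that switches occur but that they can fail to cancel. Your proposal gives no mechanism for that beyond a case enumeration you have not carried out.

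\textbf{The missing idea is to reason about the net effect on the vines, not about individual switch events.} Suppose one traversal of $\gamma$ sends a long-lived point to a different one. Apply the intermediate value theorem separately to the birth coordinates and to the death coordinates of the two vines. This forces $\gamma$ to cross a birth--birth stratum and a death--death stratum whose coincident values are coordinates of those two pairs.

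For a loop shrunk around a single singular point, both strata must pass through that point. Its concentric circles must then carry two births and two deaths at ordinary contacts. Among the generic types, only $A_1^2/A_1^2$ does this:
\begin{itemize}
\item $A_1^3$ has only three contacts.
\item In $A_2$, $A_1A_2$, $A_3$, $A_2/A_2$ and $A_1^2/A_2$, the extra critical points form fold pairs of small persistence. By Lipschitz stability these stay near the diagonal, away from the other vines.
\end{itemize}
This is the paper's route (its interchange observation, the four-contact corollary, and the isolation observation).

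An equivalent formulation is already latent in your last paragraph. Unless $p$ is an $A_1^2/A_1^2$ point with births on one circle and deaths on the other, the long-lived points of $\mathrm{Dgm}(f_p)$ are pairwise distinct. This is because two coinciding points need equal births \emph{and} equal deaths, which means two births on one circle and two deaths on another. For a small enough loop, stability then confines each long-lived vine to its own small ball. That gives both the absence of monodromy and the unlinkedness at once. You use this separation only for linking, after the permutation question, but it is what actually settles the permutation question.
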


\begin{remark} 
We note that sufficiently small in this context means that the symmetry and/or focal set in the interior of the loop is diffeomorphic to one of the model singularities given in the classification \cite{Bruce1985symmetry} as depicted in Figure \ref{fig:classification2DEarly}. 
\end{remark}

\begin{remark} 
We also note that in contrast to prior work~\cite{Chambers2026}, in our setting vines can start or end on the diagonal, in which case we extend them on the diagonal to form a loop; see Section~\ref{sec:MonodromyAndVineyards} for a precise definition and full details. 
\end{remark}

While our analysis is restricted to $\mathbb{R}^2$, we note that 2D already captures the core phenomenon.  In particular, all singularities essential for monodromy along a loop are already present in this dimension~\cite{Arnold1972,Brocker1975,Bruce1985symmetry}. 

Although we know of one work that studies the medial axis and symmetry set from the perspective of persistent homology \cite{edelsbrunner2025mid}, we are not aware of any prior work which combines the symmetry set, persistence, and singularity theory.  Of course, the symmetry set and medial axis have been studied from a singularity theory perspective, see e.g. \cite{damon2017medial, damon2006global, yomdin1981local, mather1983distance} and the symmetry set, in fact, has its roots in this community \cite{Looijenga1974Thesis, Bruce1985symmetry}.
The medial axis has also seen many successful applications in computational geometry and topology more generally, and serves as the basis for algorithms in topological inference as well as shape reconstruction and  simplification~\cite{Amenta1999,Niyogi2008,Chazal2006}. 
Beyond the direct result, namely the identification of the source of non-trivial topology and monodromy in vineyards, we believe that this connection is the most consequential result of this paper, which may open up many new and exciting mathematical directions to consider.

In addition, our classification potentially provides a framework that can be utilized to design algorithms that ensure the presence of monodromy, opening the door to tools from knot theory to find applications in the analysis of vineyards.  At the same time, it could also provide insights into how to construct vineyards which avoid knotting and monodromy, in which case feature tracking is considerably simplified.  Both avenues of research could potentially have interesting applications in the analysis of real-world data.


\section{Preliminaries}
\label{sec:preliminaries}

\subsection{Focal and Symmetry Set}

In 2D, the focal set is also called the \textit{evolute} of the curve. The evolute is defined as the locus of the centers of curvature of the curve, see e.g. \cite{porteous2001geometric}. 

Let $\mathcal{M}$ be a smooth plane curve parameterized by arc length $s$, given by $\gamma(s)$. Let $\mathbf{t}(s) = \gamma'(s)$ be the unit tangent vector and $\mathbf{n}(s)$ be the unit normal vector obtained by rotating $\mathbf{t}(s)$ by $90^\circ$ counter-clockwise.

\begin{definition}[Curvature and Radius of Curvature]
The \textbf{curvature} $\kappa(s)$ is the rate of change of the tangent vector in the direction of the normal:
\begin{equation*}
    \mathbf{t}'(s) = \kappa(s) \mathbf{n}(s)
\end{equation*}
The \textbf{radius of curvature} $\rho(s)$ is the reciprocal of the curvature:
\begin{equation*}
    \rho(s) = \frac{1}{\kappa(s)}, \quad \text{for } \kappa(s) \neq 0
\end{equation*}
\end{definition}

\begin{definition}[Evolute or focal set]
The \textbf{evolute}, or \textbf{focal set} $\foc(\M)$, of $\mathcal{M}$ is the curve $E(s)$ defined by the centers of osculating circles {of the plane curve} $\mathcal{M}$:
\begin{equation*}
    E(s) = \gamma(s) + \rho(s) \mathbf{n}(s)
\end{equation*}
where $\mathbf{n}(s)$ is the unit normal vector to the curve at $\gamma(s)$.
\end{definition}

Geometrically, the evolute can also be characterized as the envelope of the family of lines normal to $\mathcal{M}$. Additionally, it can be viewed as the locus of the centres of circles whose 2-jet coincides with the curve at the point of contact of the curve and the circle.  We note that term evolute is used exclusively in 2D, while the focal set generalizes this concept to arbitrary dimensions.

\begin{definition}[Symmetry set, \cite{Bruce1985symmetry}] \label{def:symmetry_set}
Let $\M \subset \mathbb{R}^d$ be a smooth manifold. The symmetry set $\symm (\M)$ of $\M$ is the closure of the set of points $p$ in $\mathbb{R}^d$, such that there is a radius $r$ so that $B(p,r)$ is tangent to $\M$ in at least two places.
\end{definition}

\begin{figure}[htbp]
    \centering
    \includegraphics[width=0.54\textwidth]{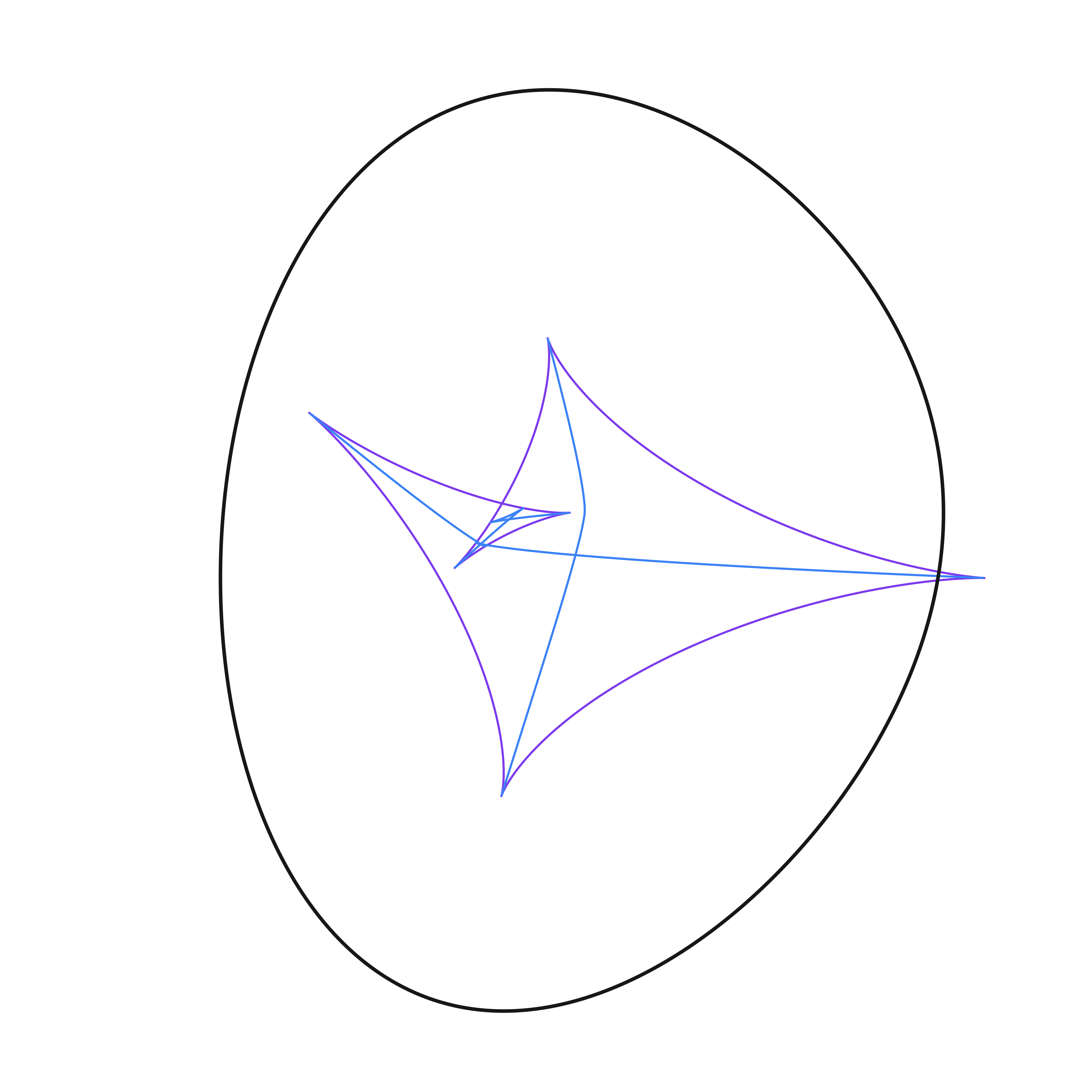}
        \caption{Visualizing the focal set (evolute) and symmetry set of a spline-defined curve $\mathcal{M}$. The purple locus represents the evolute. The blue locus represents the symmetry set.}
    \label{fig:focal_set_visualizer}
\end{figure}

\begin{definition}[Generalized symmetry set]
Let $\M \subset \mathbb{R}^d$ be a smooth manifold. The generalized symmetry set $\Gsymm (\M)$ of $\M$ is the union of $\symm (\M)$ and  $\foc(\M)$. 
\end{definition}

\subsection{Persistence and extended persistence}
\label{ssec:persistence}


\begin{figure}[tb]
    \centering

    \begin{subfigure}[t]{0.50\textwidth}
        \centering
        \includegraphics[width=\linewidth]{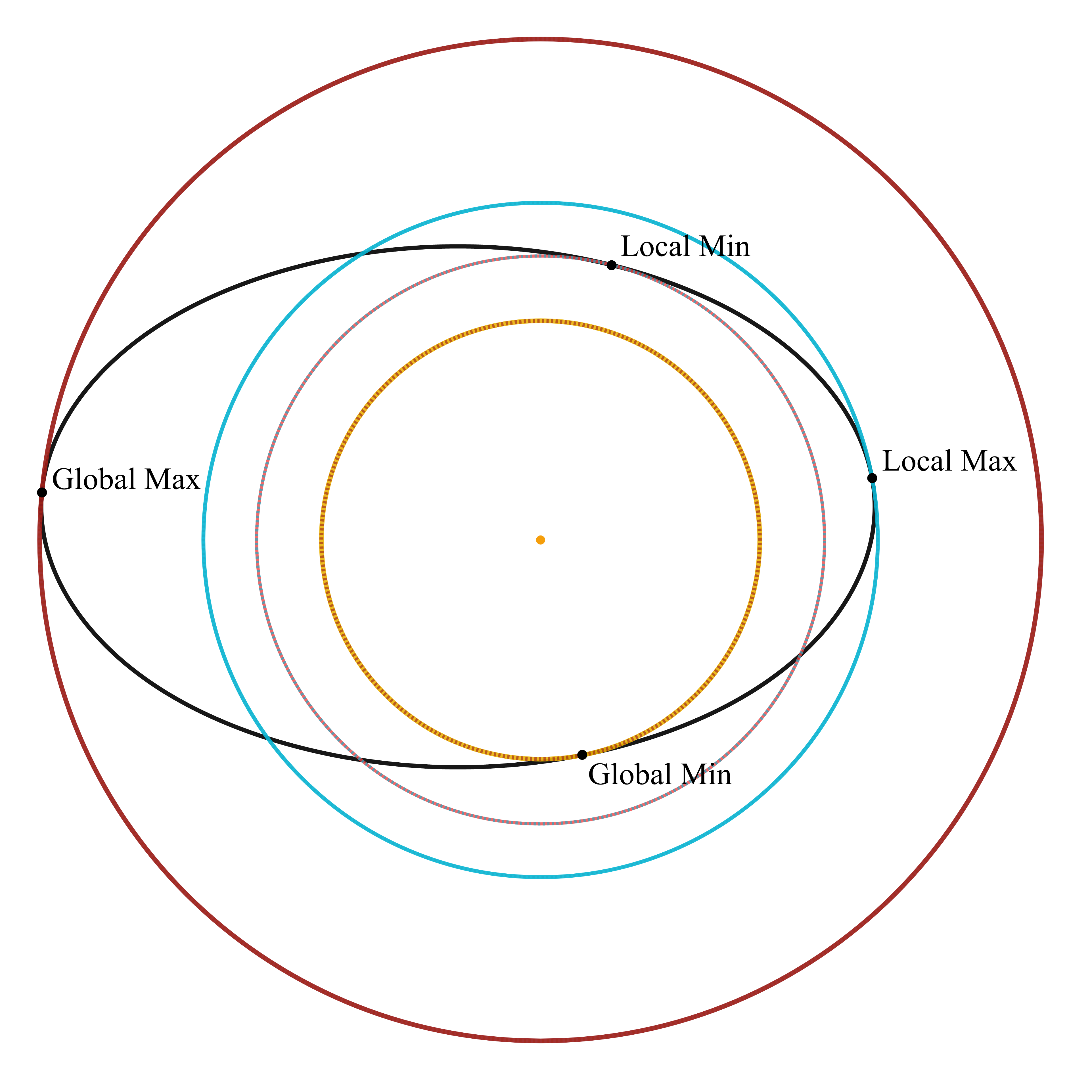}
        \caption{Radial filtration of the ellipse.}
        \label{fig:first-image}
    \end{subfigure}
    \hfill
    \begin{subfigure}[t]{0.48\textwidth}
        \centering
        \includegraphics[width=\linewidth]{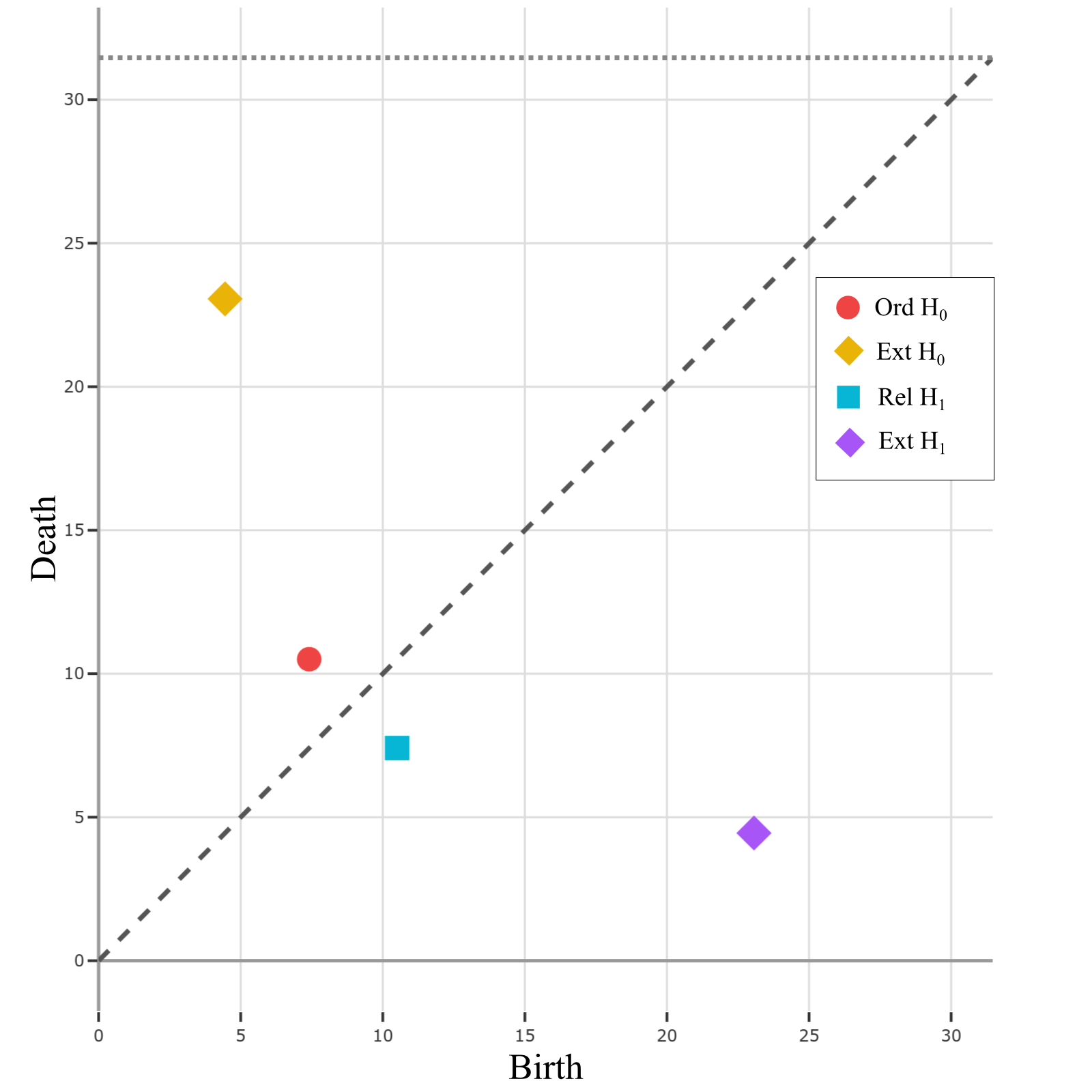}
        \caption{Extended persistence diagram of the ellipse.}
        \label{fig:second-image}
    \end{subfigure}

    \caption{Radial persistence of an ellipse (shown in black) with respect to a base point (shown in yellow). Note that each of the four critical radii  contribute to two persistence points (one in the upward filtration and one on the downward relative filtration), for a total of four points, two of which are above the diagonal and two below. }
    \label{fig:2dpersistence}
\end{figure}

Given a manifold $\mathcal{M}$ and a function $f \colon \mathcal{M} \to \mathbb{R}$, persistent homology tracks how the topology of the sublevel sets $\mathcal{M}_a = f^{-1}(-\infty, a]$ evolves as $a$ increases~\cite{Edelsbrunner2002,Robins1999}; this results in a filtration where each space $M_a$ includes into later spaces $M_{a'}$ where $a<a'$. When $f$ is a Morse function, each critical point either creates a new homology class or destroys an existing one. We say that a $\ell$-dimensional class $\alpha \in H_{\ell}(\mathcal{M}_{a_i})$ is \emph{born} at $a_i$ if it does not appear in the image of the inclusion-induced map from $H_{\ell}(\mathcal{M}_{a_{i-1}})$, and \emph{dies} at $a_j$ if it first becomes trivial relative to an older class at that level. Pairing each birth with its corresponding death yields the \emph{persistence diagram} $\mathrm{Dgm}_{\ell}(f)$, a multiset of points $(b, d) \in \mathbb{R}^2$ above the diagonal; the value $d - b$ measures the lifetime, or \emph{persistence}, of the corresponding topological feature.  

A limitation of standard persistence in our context is that some homology classes are born but never die within the sublevel set filtration.  For instance, a nontrivial loop in a surface is present across all larger sublevel sets after its birth, and hence is said to contribute a point at infinity. Extended persistence resolves this by appending a descending filtration through relative homology groups, applying Lefschetz and Poincar\'e duality~\cite{CohenSteiner2008,Agarwal2006}, so that after the sublevel set filtration reaches $\mathcal{M}$, we append the groups $H_p(\mathcal{M}, \mathcal{M}^a)$ to the filtration, where $\mathcal{M}^a = f^{-1}[a, \infty)$. The resulting extended filtration sequence
\begin{equation*}
    0 \to H_p(\mathcal{M}_{a_1}) \to \cdots \to H_p(\mathcal{M}) \to
    H_p(\mathcal{M}, \mathcal{M}^{a_k}) \to \cdots \to H_p(\mathcal{M}, \mathcal{M}) = 0
\end{equation*}
begins and ends at zero, guaranteeing that every homology class born in the filtration eventually dies at a finite value.  See Figure~\ref{fig:2dpersistence}.

The pairs arising from this construction fall into three types: \emph{ordinary} pairs, born and dying in the upward (sublevel set) sweep; \emph{relative} pairs, born and dying in the downward (relative) sweep; and \emph{extended} pairs, that are born on the upward sweep and die on the downward. Ordinary pairs lie above the diagonal; relative pairs below; and extended pairs may appear on either side.  Note that for manifolds without boundary, the persistent points are mirrored across the diagonal; see again Figure~\ref{fig:2dpersistence}.  Together, these points provide a complete pairing of the critical points of~$f$.

\subsection{Monodromy and vineyards}
\label{sec:MonodromyAndVineyards}


Let $\M \subset \mathbb{R}^2$ be a smooth closed curve and let
\[
\gamma \colon [0,2\pi] \to \mathbb{R}^2
\]
be a smooth loop, so that $\gamma(0)=\gamma(2\pi)$. For each $t \in [0,2\pi]$ we consider the
function\footnote{We work with the squared distance, instead of the distance (which is used in e.g. \cite{Chambers2026}),  because the singularity theoretic classification recalled in
Section~\ref{sec:Singularities_Plane} is naturally phrased for $d_{\mathbb{E}}(\,\cdot\,,p)^2\big|_\M$. Since the map
$r \mapsto r^2$ is strictly increasing on $[0,\infty)$, the sublevel set filtrations of
$d_{\mathbb{E}}(\,\cdot\,,\gamma(t))\big|_\M$ and $d_{\mathbb{E}}(\,\cdot\,,\gamma(t))^2\big|_\M$
differ only by a reparametrization of the filtration parameter. Hence the combinatorics of births,
deaths, vines, and monodromy are the same in either convention.}
\[
f_t \colon \M \to \mathbb{R}, \qquad
f_t(x) = d_{\mathbb{E}}(x,\gamma(t))^2\big|_\M .
\]

Fix a homological degree $\ell \geq 0$. For each $t$, let
\[
\mathrm{Dgm}_{\ell}(f_t)
\]
denote the (extended) $\ell$-dimensional persistence diagram of the sublevel set filtration induced by $f_t$.
When the homological degree is clear from the context, we suppress the subscript $\ell$ and write
simply $\mathrm{Dgm}(f_t)$. We also write
\[
\Delta = \{(b,d)\in \mathbb{R}^2 \mid b=d\}
\]
for the diagonal.

\textbf{Throughout this paper we assume that the family is generic in the following sense:} each
persistence diagram $\mathrm{Dgm}_{\ell}(f_t)$ contains only finitely many points and has no point of
multiplicity larger than one. In particular, away from the diagonal, the points in the persistence
diagram move continuously with $t$ by stability of persistence diagrams~\cite{CohenSteinerStability},
and the resulting vineyard has no self-intersections away from $\Delta$. This is the generic
situation for vineyards, as noted in prior work~\cite{turner2023representing}.

\begin{definition}[Closed vineyard and vine]
For a fixed degree $\ell$, the \emph{closed $\ell$-vineyard} associated to $(\M,\gamma)$ is
\[
\mathcal{V}_{\ell}(\M,\gamma)
=
\left\{ (t,z)\in S^1\times \mathbb{R}^2 \; { \big |} \;
z\in \mathrm{Dgm}_{\ell}(f_t) \right \},
\]
where we identify $S^1$ with $[0,2\pi]/(0\sim 2\pi)$.

A \emph{vine} is a connected component of
\[
\mathcal{V}_{\ell}(\M,\gamma)\setminus (S^1\times \Delta).
\]
Equivalently, a vine is obtained by continuously following a single off-diagonal persistence point
as $t$ varies.
\end{definition}

We note that this definition can be used with either standard or extended persistence; in this work, we use extended persistence throughout for our later results.

For the purpose of this paper, we only need to define what it means for a vineyard \emph{not} to have monodromy or non-trivial topology. This is significantly simpler than defining monodromy and its order completely, as described in \cite{Chambers2026}. 

We first note that if a vine has a limit point on the diagonal, then it has two limit points on the diagonal, because by assumption there are only a finite number of points in the persistence diagram. This means that for every vine that starts and ends on the diagonal, we can define the \emph{extension} that connects the end point of the vine with the start point forward in periodic time. 

\begin{definition}\label{def:canonVineyard}
We say that a vine does not exhibit monodromy if its extension forms exactly a single cover of $S^1$. In other words, the extended vine intersects every time slice for each $t$  only once. 
We say that the vineyard does not exhibit monodromy if this holds for every vine in the vineyard. 

We say that a vineyard has trivial topology if there is an isotopy from $ S^1 \times \mathbb{R}^2$ to itself while mapping $S^1 \times \Delta$ to itself, such that the vines are mapped to either 
\begin{itemize}
    \item to exactly one of the $\{ (t , (i,i+1))| t \in [0,2\pi ] \}/\sim $, with $i \in \{ 1,\dots ,n\}$ where $n$ is the number of vines that do not have a limit point on the diagonal and $\sim$ indicates the identification of $0$ and $2 \pi$, if the (extended) vine has no limit point on the diagonal.  
    \item to exactly one of the  $\{ (t , (j,j + f(t) )|  t \in [0,2\pi ] \}/\sim $, with 
    \[f(t) = \begin{cases}
      t (1 -t) & \textrm{if } t \in [0,1]   \\
      0 & \textrm{otherwise}
    \end{cases}\ \] 
    and $j \in \{ n+1,\dots ,n+k\}$ where $k$ is the number of vines that does have a limit point on the diagonal and $\sim$ indicates the identification of $0$ and $2 \pi$,  if the (extended) vine has a limit point on the diagonal.  
\end{itemize} 
    See Figure~\ref{fig:trivial-vineyard} for an illustration.
\end{definition}

\begin{figure}[ht]
    \centering
    \includegraphics[width=0.5\linewidth]{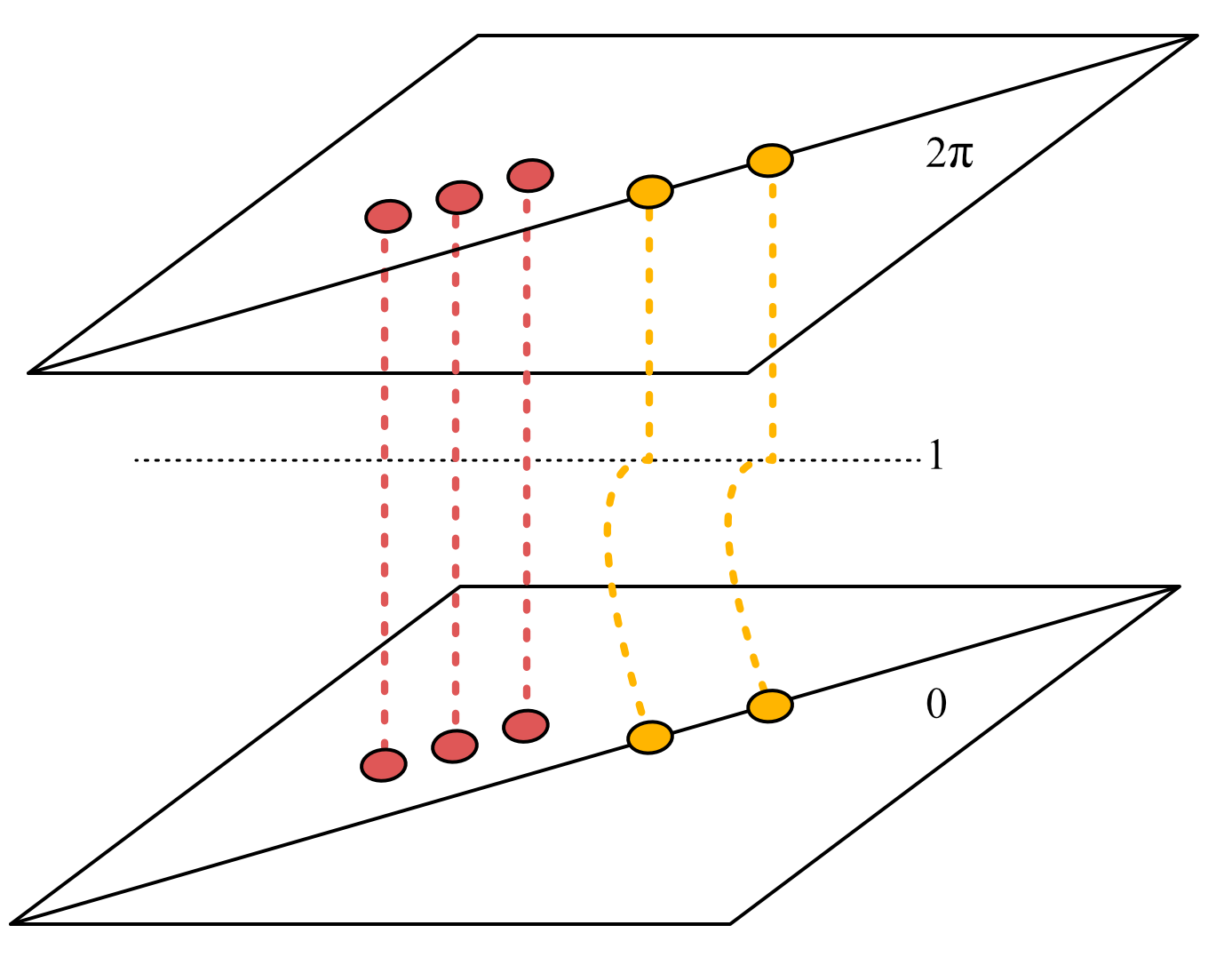} 
    \caption{  A canonical vineyard with trivial topology in the sense of Definition \ref{def:canonVineyard}.  }
        \label{fig:trivial-vineyard}
\end{figure}

We say that a vine induces an \emph{interchange} if it is a double cover of $S^1$ and following the points in a persistence diagram for a period of $2\pi$ interchanges those points.

\subsection{Singularities in the plane}
\label{sec:Singularities_Plane}
In this section we recall the classification of the singularities of the symmetry set and focal set. We call this union the generalized symmetry set; Bruce, Giblin, and Gibson \cite{Bruce1985symmetry}, and Giblin and Diatta \cite{giblin2003symmetry}\footnote{There is also a paper \cite{WrongVersionGiblin} with a similar title and significant overlap in content, where only Giblin is mentioned as an author. However, the part that is most useful to us can only be found in \cite{giblin2003symmetry}. }  refer to as the full bifurcation set. 
In this section we will assume that $\M$ is generic in the sense of \cite{Bruce1985symmetry}.

We will follow Arnold's notation for the singularities \cite{arnol1993singularity, arnol2003catastrophe,arnold1998singularity, Arnold2012}, to whose work we also refer the curious for general background reading. 
In two dimensions there are only singularities of type $A_k$. Specifically, the type $A_k$ denotes the setting where the Euclidean squared distance function $d_\mathbb{E} (\cdot,p)^2 |_\M$ restricted to the 1-dimensional manifold $\M$ is $\mathcal{R}$-equivalent (that is equivalent up to diffeomorphism, see \cite{bruce1992curves} for a full definition) to $\pm \tau^{k+1}+c$, where $\tau$ parametrizes $\M$ and $c$ indicates some constant (which can be interpreted as the radius squared). 
This means that $A_1$ indicates that there is a point where a circle centred on $p$ is tangent to $\M$. This is also referred to as an ordinary contact.  Similarly, $A_2$ indicates that there is a point on $\M$ such that $p$ is the centre of curvature of that point. Put differently, the circle centred at $p$ and $\M$ have the same first and second order derivatives at that point. We also say that there is a type $A_2$ contact. 
Finally, $A_3$ indicates that there is a point on $\M$ such that not only $p$ is a centre of curvature, but also the curvature at this point is also a local maximum (in absolute value). Put differently, the circle centred at $p$ and $\M$ have the same first, second, and third order derivatives at that point. Similarly to before, we say that there is a contact of type $A_3$.  
Different type contacts (of higher order) are not generic.

We write $A_i A_j$ if there are both a type $A_i$ and $A_j$ contact between the manifold and the same circle. This notation generalizes to an arbitrary number of contact points, here we allow ourselves to abbreviate where repeated contacts of the same type are indicated as a power, i.e. $A_1A_1= A_1^2$. If there are contacts between circles with different radii, but sharing the same centre, and the manifold (curve in this case) then we indicate this by a slash, for example $A_1^2/A_1^2$, indicates that there are two concentric circles, each of which with an ordinary contact with the curve $\M$. 

\begin{figure}[h!] 
    \centering
    \includegraphics[width=1\linewidth]{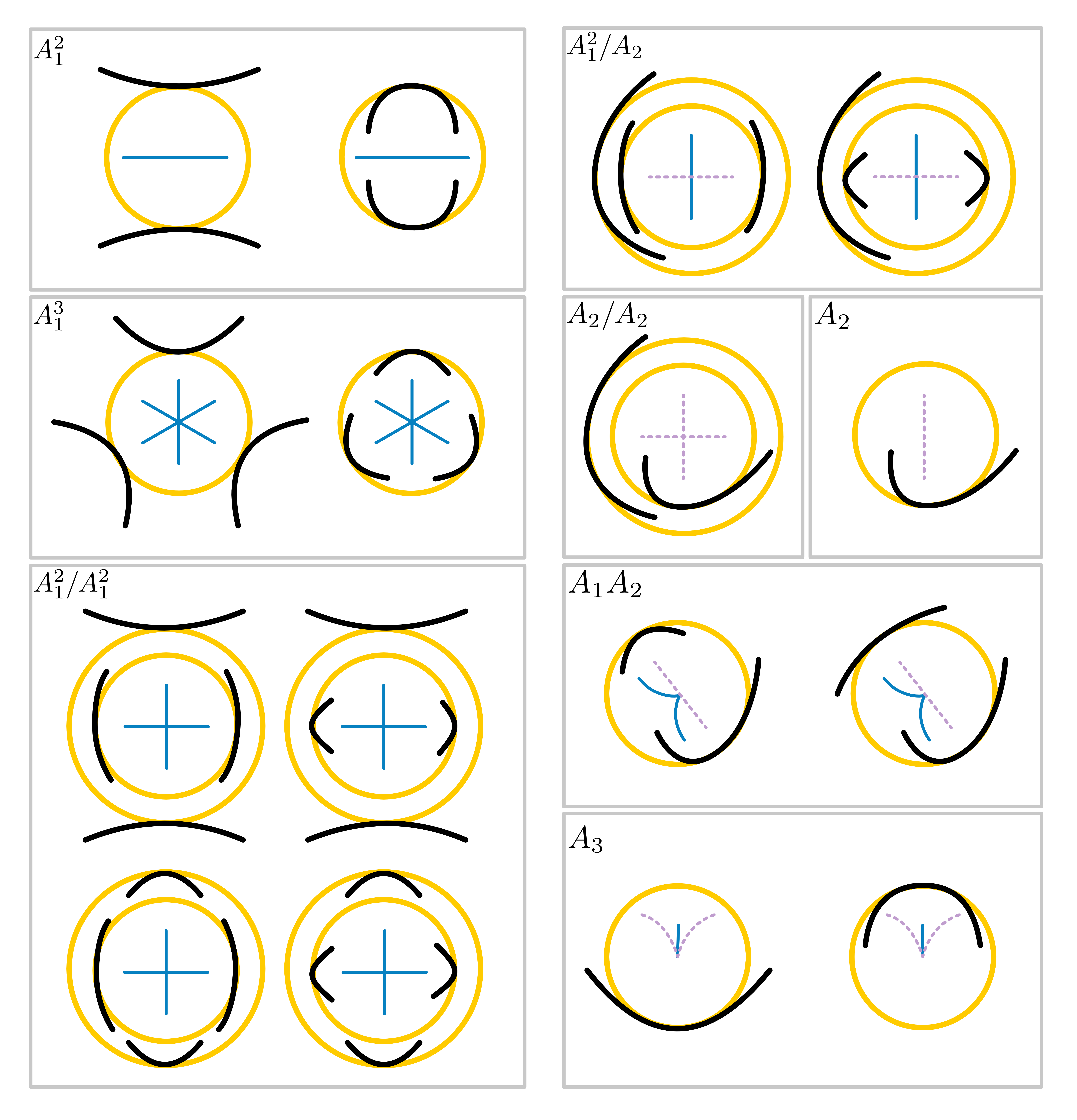}
    \caption{The singularities of the symmetry and focal set in the plane. We adopt Arnold's notation for these singularities. {We denote the symmetry set in blue, the focal set in dashed purple, the manifold in black, and the relevant circles in yellow. We simplify the drawing using straight lines when possible to represent the singularities clearly rather than forcing completely correct geometry.} {We have indicated a number of different cases for $A_1^2$, $A_3$, $A_1^2/A_2$, $A_1^2/A_1^2$, but not all (given the number of cases involved). Note that every contact point of type $A_1$ can be either a local minimum or maximum. }
    \label{fig:classification2D} } 
\end{figure}

Having now described the notation, we can go over the types of singularities. We note that Bruce, Giblin, and Gibson \cite{Bruce1985symmetry}, and Giblin and Diatta \cite{giblin2003symmetry} ignored the $A_2/A_2$ singularity because they were focused on the symmetry set, however, they were clearly aware of it because it shows up in their examples.  Their examples we will treat piecewise. We refer to Figure \ref{fig:classification2D} for an overview of the entire classification in two dimensions. 

\subparagraph{Type  \texorpdfstring{$A_1^2$}{A12}} 
The circle centred at $p$ is tangent to $\M$ at two different places. We stress that $d_\mathbb{E} (\cdot,p)^2 |_\M$ at the contact points can have a local minimum or local maximum, or geometrically speaking the curve $\M$ can locally lie inside or outside the circle.
The point $p$ lies on a manifold piece of the symmetry set, see Figure \ref{fig:classification2D}.  
We say that:
\begin{itemize} 
\item if the two contact points are local minima (a $0$-cycle is born for each of the Morse critical points in the persistence of the sublevel set) then $p$ lies on a birth-birth stratum of the symmetry set,
\item if the two contact points are local maxima (a $0$-cycle dies for each of the morse critical points in the persistence of the sublevel set) then $p$ lies on a death-death stratum of the symmetry set,
\item if one of the two contact points is a local minimum and the other a local maximum, then $p$ lies on a birth-death stratum of the symmetry set.
\end{itemize}



\subparagraph{Type  \texorpdfstring{$A_1^2/A_1^2$}{A12A12}}
Two circles of different radii centred at $p$ are both tangent to $\M$ at two different points (that is, there are $4$ contact points in total). This concerns a transversal intersection of manifold pieces of the symmetry set. Once again each of the contact points may correspond to either a local minimum or maximum.  

\subparagraph{Type  \texorpdfstring{$A_1^3$}{A13}}
A circle centred on $p$ is tangent to $\M$ at three different points. The symmetry set is locally diffeomorphic to three line pieces intersecting in a single point.  

\subparagraph{Type  \texorpdfstring{$A_2$}{A2}} The point $p$ is the centre of curvature of a point on $\M$, in other words, the circle centred at $p$ is an osculating circle to $\M$. The point $p$ lies on a manifold piece of the focal set (which is also called the evolute in the planar setting). 

\subparagraph{Type \texorpdfstring{$A_1A_2$}{A1A2}}
The circle centred on $p$ is tangent to $\M$ at two different points. One of these is an ordinary contact, that is, the circle is tangent to $\M$ there. The circle centred at $p$ is also a centre of curvature at a point of $\M$ (other than the point of tangency mentioned before), or, in other words, the circle has a second-order contact. The point $p$ is where a cusp of the symmetry set touches a manifold piece of the focal set, see Figure \ref{fig:classification2D}.  

\subparagraph{Type  \texorpdfstring{$A_3$}{A3}} 
In this case $p$ is not only a centre of curvature, but the curvature at the tangent point is a local minimum or maximum. The point $p$ lies on the closure of a symmetry set and on a cusp of the focal set.


\subparagraph{Type \texorpdfstring{$A_1^2/A_2$}{A1A2}}
Here, there are two circles that are concentric at $p$. The curve $\M$ is tangent to one of these circles in two places. The other concentric circle is an osculating circle of the manifold (in a place distinct from the previous two mentioned points). The point $p$ lies on a transverse intersection of a manifold piece of the symmetry set and a manifold piece of the focal set.

\subparagraph{Type \texorpdfstring{$A_2/A_2$}{A2A2}}
In this case, again, there are two concentric circles centred at $p$ (of different radii). Both circles are osculating circles to the curve/manifold $\M$. The point $p$ lies on the intersection of two manifold pieces of the focal set or evolute.

\subparagraph{Illustration} We provide illustrations of all generic singularities (which we have listed above) in an explicit example, as depicted in Figure \ref{fig:PotatoAnnotated}.  

\begin{figure}[h!]
    \centering
    \includegraphics[width=\linewidth]{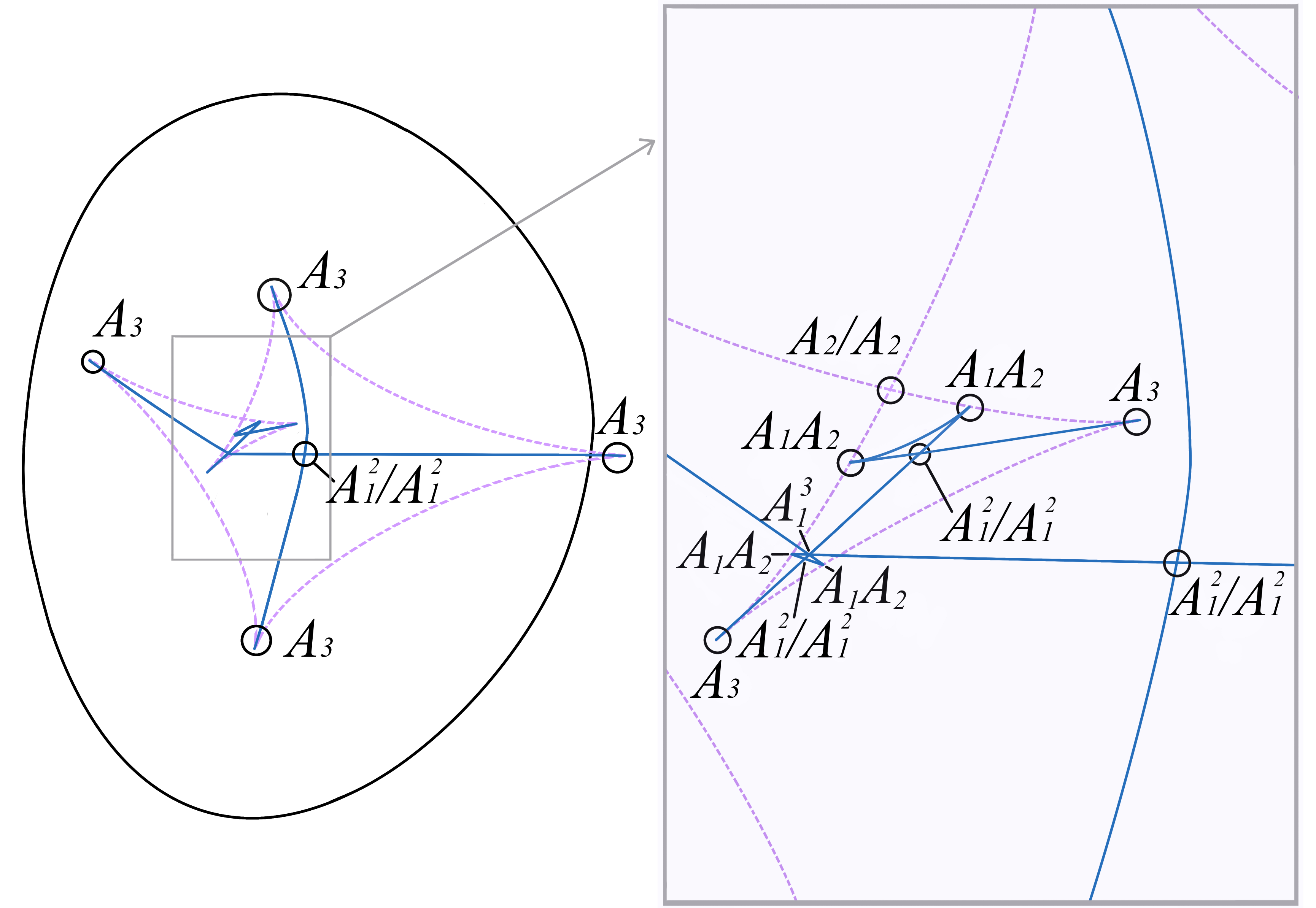}
    \caption{The singularities of the symmetry and focal set in the plane of a curve. Adjusted from Figure 3 of \cite{Bruce1985symmetry}, which was not annotated. On the right, one finds a zoom-in of the configuration of singularities in the centre of the configuration. 
    \label{fig:PotatoAnnotated} } 
\end{figure}

\subparagraph{Genericity assumptions} 
We note that the earlier assumption that the vineyard does not contain self-intersections corresponds to the avoidance of certain codimension $2$ or dimension $0$ singularities in the symmetry set. In fact, as we will discuss below, we will concentrate on closed curves or loops $\gamma(t)$ that are generic in  the sense that they avoid all codimension $2$ or dimension $0$ singularities in the symmetry set.

\subsection{Topological stability outside the symmetry set}

\begin{proposition}\label{prop:ConstOrd}
Let $\M$ be a manifold in $\mathbb{R}^2$ and $U$ be a connected component of the complement of $\operatorname{gsym}(\mathcal{M})$. Suppose that $y, y' \in U$ and let $(b_i,d_i)$, $(b_i',d_i')$ be the persistence points in $\operatorname{Dgm}(d_{\mathbb{E}}^2(x,y)\big|_{\mathcal{M}})$, $\operatorname{Dgm}(d_{\mathbb{E}}^2(x,y')\big|_{\mathcal{M}})$ respectively, where we order them according to birth time. This induces an ordering or permutation $\pi$ ($\pi'$, respectively) such that $d_{\pi(i)} \leq d_{\pi(i+1)}$ ($d_{\pi'(i)} \leq d_{\pi'(i+1)}$, respectively) for all $i$. We have that $\pi= \pi'$, and we say that the diagrams are combinatorially equivalent.  
\end{proposition}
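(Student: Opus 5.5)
Throughout $U$ we want to follow the critical points of $f_y = d_{\mathbb{E}}^2(\cdot,y)|_{\M}$ continuously and to show that no two critical values ever cross. Once that holds, the extended-persistence pairing, which depends only on the order of the critical values and on the topology of the sublevel sets, stays constant along any path in $U$. The relative order of births and the relative order of deaths are then also constant, and so $\pi=\pi'$.

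\textbf{Step 1: critical points are Morse and vary smoothly.} Because $U$ avoids $\foc(\M)$, for every $y\in U$ each critical point of $f_y$ is an $A_1$ contact, i.e.\ a nondegenerate critical point. Contacts of type $A_2$ or higher happen exactly when $y$ is a centre of curvature. Hence $f_y$ is Morse for all $y\in U$. Since $\M$ is compact, $f_y$ has finitely many critical points. By the implicit function theorem applied to $\partial_\tau f_y(\tau)=0$, whose $\tau$-derivative is nonzero, these critical points move smoothly with $y$. Their number and indices (min or max) are therefore locally constant, hence constant on the connected set $U$.

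\textbf{Step 2: critical values never coincide.} Take two distinct critical points $x_i(y)$ and $x_j(y)$. If $f_y(x_i)=f_y(x_j)$, then the circle centred at $y$ with that radius would be tangent to $\M$ at two points, which places $y$ in $\symm(\M)$. This is excluded on $U$. So the continuous function $y\mapsto f_y(x_i(y))-f_y(x_j(y))$ never vanishes on the connected set $U$ and keeps a fixed sign. The total order of all critical values is thus the same throughout $U$.

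\textbf{Step 3: the diagram is combinatorially unchanged.} Join $y$ to $y'$ by a path in $U$; such a path exists because $U$ is an open connected subset of $\mathbb{R}^2$ and hence path-connected. Along the path, $f_{y(s)}$ is a family of Morse functions with constant critical-value order. By standard Morse theory, any two functions in this family are related by a homeomorphism of $\M$ composed with an increasing reparametrization of $\mathbb{R}$. Concretely, one can integrate a time-dependent gradient-like vector field to build an isotopy $\phi_s$ of $\M$ and increasing maps $h_s$ with $f_{y(s)}\circ\phi_s=h_s\circ f_{y(0)}$. Conjugating filtrations in this way leaves the extended persistence pairing of critical points unchanged; only the coordinates move monotonically. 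Each persistence point $(b_i,d_i)$ is therefore carried to $(h(b_i),h(d_i))$ for a single increasing $h$. Because $h$ preserves order, the birth ordering and the permutation $\pi$ are preserved, which gives $\pi=\pi'$.

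\textbf{Main obstacle.} The delicate step is constructing the isotopy in Step 3 uniformly along the path. A cleaner route avoids it by an algebraic argument. On $\M\cong S^1$, extended persistence is determined entirely by the cyclic arrangement of minima and maxima together with the order of their values. Step 1 fixes the cyclic arrangement and Step 2 fixes the order of values, so the pairing is constant. I would try this combinatorial argument first and fall back on the isotopy only if needed.

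One point should be handled explicitly: the birth-time ordering used to label the points is itself constant, which follows from Step 2. For a general $\M$ with several components, the same argument applies componentwise.
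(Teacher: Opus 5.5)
Your proposal is correct and follows essentially the same route as the paper. Off $\operatorname{gsym}(\mathcal{M})$ the function is Morse with pairwise distinct critical values, so along a path in $U$ from $y$ to $y'$ no two critical values can cross: the paper argues this with the intermediate value theorem, you with a nonvanishing continuous difference on a connected set. Your Step 3 (and the combinatorial alternative on $S^1$) also makes explicit why the persistence pairing itself stays fixed, and not only the order of the critical values, a point the paper's short proof leaves implicit.
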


\begin{proof}
Outside of the generalized symmetry set the function $d_{\mathbb{E}}^2(x,y)\big|_{\mathcal{M}}$ is Morse and has distinct critical values.\footnote{From the singularity theoretic viewpoint, this means that the squared distance function $d_\mathbb{E} (\cdot,p)^2 |_\M$ restricted to the 1-dimensional manifold $\M$ is $\mathcal{R}$-equivalent to either $\tau +c$ or $\pm \tau^2 +c$, where $\tau$ parametrizes $\M$ and $c$ indicates some constant. } Consider a curve $\gamma(t)$ which connects $y$ and $y'$, which exists because they lie in the same connected component. Now the order of the births and deaths in the persistence diagram $\operatorname{Dgm}(d_{\mathbb{E}}^2(x,\gamma(t) )\big|_{\mathcal{M}})$ cannot change with time, because for the order to change they would need to be equal at some point (by the intermediate value theorem), but this is excluded. 
\end{proof}

This means a vineyard has a simple structure if our observation loop avoids $\operatorname{gsym}(\mathcal{M})$: 
\begin{corollary} \label{Cor:NoMondromyInConnectedComponent}
    The vineyard of a loop $\gamma$ that is contained in such a connected component of the complement of $\operatorname{gsym}(\mathcal{M})$ is topologically the set of $k$ unlinked circles, where $k$ is the number of points in the persistence diagram $\operatorname{Dgm}(f_{\gamma(t^*) })$ for any and thus all $t^*$. Moreover the vineyard does not exhibit monodromy. 
\end{corollary}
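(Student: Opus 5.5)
The statement to prove is Corollary~\ref{Cor:NoMondromyInConnectedComponent}. I would derive it directly from Proposition~\ref{prop:ConstOrd} together with the genericity assumptions. The idea is that as $t$ runs over $S^1$, the base point $\gamma(t)$ stays in the single component $U$ of the complement of $\operatorname{gsym}(\M)$, so $f_t$ is Morse with distinct critical values for every $t$.

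\textbf{Step 1: the critical points move smoothly and keep their labels.} Since each critical point of $f_t$ is nondegenerate, the implicit function theorem applied to $\partial_x f_t(x)=0$ gives the following. Locally in $t$, the critical points are smooth functions $x_j(t)$, and their number and Morse indices are constant. Because $S^1$ is compact and no degeneracy ever occurs, these local branches patch together. Following the critical points once around the loop therefore gives a permutation $\sigma$ of the finite critical set of $f_{\gamma(0)}$.

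\textbf{Step 2: this permutation is the identity.} The critical values $f_t(x_j(t))$ are continuous in $t$ and pairwise distinct for all $t$. By the intermediate value argument used in the proof of Proposition~\ref{prop:ConstOrd}, their order never changes. Since $x_{\sigma(j)}$ has the same rank in the order as $x_j$ after one full loop, we get $\sigma=\mathrm{id}$.

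\textbf{Step 3: the persistence pairing is constant.} The extended persistence pairing is determined purely by the order of the critical values together with the combinatorics of the sublevel sets. Proposition~\ref{prop:ConstOrd} says the diagrams for all $t$ are combinatorially equivalent. So the same pairs of critical points are matched for every $t$, and each diagram point is $z_i(t)=(f_t(x_{a_i}(t)),\,f_t(x_{b_i}(t)))$, which is continuous in $t$ and periodic. No point ever reaches $\Delta$, because its two coordinates are distinct critical values. Hence each vine is a closed curve meeting every time slice exactly once, and no vine has a limit point on $\Delta$. This gives the absence of monodromy, and the number of vines is $k=|\mathrm{Dgm}(f_{\gamma(t^*)})|$, independent of $t^*$.

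\textbf{Step 4: trivial topology.} I would construct the isotopy required by Definition~\ref{def:canonVineyard} explicitly. The diagram points at time $t$ lie in $\mathbb{R}^2\setminus\Delta$ and are ordered by birth, with births pairwise distinct and this order fixed in $t$. So $t\mapsto (z_1(t),\dots,z_k(t))$ is a loop in the ordered configuration space in which the first coordinates are strictly increasing.
\begin{itemize}
\item First, for each $t$, apply the translation and stretch along the direction of $\Delta$ that sends birth coordinate $b_i(t)$ to $i$. This is piecewise affine, monotone and depends continuously on $t$, so it is an isotopy of the slices preserving $\Delta$.
\item Then straight-line homotope the persistence coordinate $d_i(t)-b_i(t)$, which never changes sign and never vanishes, to the constant $\pm1$. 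Points on opposite sides of the diagonal (ordinary versus relative pairs) are handled separately.
\end{itemize}
Each of these isotopies can be extended to ambient isotopies of $S^1\times\mathbb{R}^2$ preserving $S^1\times\Delta$. The result is the canonical form of $k$ unlinked circles.

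\textbf{Main obstacle.} The delicate point is this last extension. We must ensure distinct vines stay disjoint during the homotopy, and that the ambient isotopy respects $\Delta$. Distinctness of the birth coordinates, hence their constant order, handles disjointness in the first step, and constant sign handles the second. Points mirrored across the diagonal are also distinguished by their birth coordinate, so they stay apart throughout.
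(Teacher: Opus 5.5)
Your proposal is correct and follows the paper's route: the paper states this corollary as an immediate consequence of Proposition~\ref{prop:ConstOrd}, without a separate proof. Inside a component of the complement of $\operatorname{gsym}(\M)$ no two critical values can coincide, so births, deaths and their pairing stay fixed along the loop; your implicit-function tracking of the critical points and your explicit isotopy spell out what the paper leaves implicit.

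Two minor points:
\begin{enumerate}
\item In Step~3, the constancy of the pairing is best justified in one of two ways. Either note that critical points never collide, so their cyclic order on $\M$ is fixed; or use bottleneck continuity together with the uniform gap between critical values.
\item Your normalization of the persistence coordinate to $\pm 1$ is the right reading of Definition~\ref{def:canonVineyard} for extended persistence. The definition's literal targets $(i,i+1)$ all lie above $\Delta$, so they cannot receive the mirrored points below the diagonal.
\end{enumerate}
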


\begin{corollary} We note that Proposition \ref{prop:ConstOrd} implies that the distinction between birth-birth, birth-death and death-death strata is well defined, in the sense that the point where you cross the same manifold piece of the symmetry set does not matter.
\end{corollary}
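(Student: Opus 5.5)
The plan is to reduce the claim to Proposition~\ref{prop:ConstOrd}, plus a continuity argument for the Morse indices of the contact points. Let $S$ be a connected manifold piece of $\symm(\M)$ made of $A_1^2$ points, meaning an open arc of the symmetry set that meets no other stratum of $\Gsymm(\M)$. Let $p,p'\in S$ be two points at which a generic path crosses $S$ transversally. We must show that the labels of the two contact points, each a local minimum or local maximum of $d_{\mathbb{E}}(\cdot,p)^2|_\M$, are the same at $p$ and at $p'$. Then the stratum type (birth-birth, birth-death or death-death) is the same at both crossings.

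The first step is intrinsic to $S$. For $q\in S$ let $x_1(q),x_2(q)\in\M$ be the two tangency points of the circle centred at $q$. Since $q$ is an $A_1^2$ point and not in $\foc(\M)$, each $x_i(q)$ is a nondegenerate critical point of $f_q=d_{\mathbb{E}}(\cdot,q)^2|_\M$. The implicit function theorem, applied to $\partial_\tau f_q=0$, then shows that $x_i(q)$ depends smoothly on $q$ along $S$. The second derivative $\partial_\tau^2 f_q(x_i(q))$ is continuous in $q$. It vanishes exactly when $q$ is the centre of curvature at $x_i(q)$, that is, when $q\in\foc(\M)$, and this is excluded on $S$. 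Because $S$ is connected, the intermediate value theorem shows the sign of this second derivative is constant along $S$. So each contact point remains a minimum or remains a maximum all along $S$, and the labelling is independent of the crossing point.

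The second step connects this to the persistence diagrams, as the corollary's phrasing via Proposition~\ref{prop:ConstOrd} suggests. Take a thin tubular neighbourhood $N$ of a compact subarc of $S$ containing $p$ and $p'$, chosen so small that $N\cap\Gsymm(\M)$ is just that subarc. Then $N\setminus S$ has exactly two connected sides $N_+$ and $N_-$, and each side lies in a single component $U_\pm$ of the complement of $\Gsymm(\M)$. By Proposition~\ref{prop:ConstOrd}, the combinatorial type of the diagram is constant on $U_+$ and on $U_-$. The crossing at $p$ and the crossing at $p'$ both go from a point of $U_+$ to a point of $U_-$, so they induce the same combinatorial change of the diagram. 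That change is a swap of two births, of two deaths, or of a birth with a death, which is precisely the birth-birth, birth-death or death-death label.

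The main obstacle is the genericity bookkeeping in the second step. We must ensure that $S$ is an embedded arc, so that the two sides of the tubular neighbourhood are connected, and that no other piece of $\Gsymm(\M)$ accumulates along $S$. I would first appeal to the local models of \cite{Bruce1985symmetry}: away from the codimension-two singularities, the pieces are embedded curves. If that is not enough, the first, intrinsic argument already gives the claim without needing the sides of $S$ to be connected.
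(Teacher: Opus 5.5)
Your proposal is correct, and its main argument takes a different route from the paper. The paper gives no argument beyond citing Proposition~\ref{prop:ConstOrd}: the combinatorics is constant on each component of the complement of $\Gsymm(\M)$, so crossing the same piece always produces the same exchange of critical values. Your second step reproduces that idea. Your first step, however, is a genuinely different and more elementary argument, and it is what actually carries the proof. At a contact point $x$ with $q = x + r\,\mathbf{n}(x)$ one has $\partial_\tau^2 f_q(x) = 2(1-\kappa r)$, which vanishes only when $q$ is the centre of curvature of $x$. So the continued contact points stay nondegenerate along $S$, and their indices are constant on the connected arc. This needs neither Proposition~\ref{prop:ConstOrd} nor the separation and tubular-neighbourhood bookkeeping. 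It even extends through points such as $A_1^2/A_1^2$ or $A_1^2/A_2$, where the two contacts in question remain nondegenerate.

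The route through Proposition~\ref{prop:ConstOrd} only works with a stronger reading of that proposition than its literal statement. The permutation $\pi$ is unchanged by a birth--death exchange between two different persistence pairs. It is also unchanged by a birth--birth exchange accompanied by an elder-rule re-pairing $(b^1,D^K),(b^2,D^L)\to(b^2,D^K),(b^1,D^L)$. So you must use that the full interleaved order of the individually tracked critical points, together with their indices, is constant on each component. That stronger statement is what the intermediate-value proof of the proposition really gives. The paper's route ties the label to the persistence combinatorics in which it is used later; yours gives a self-contained proof that the label is intrinsic to the piece of the symmetry set.
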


The previous can even be generalized: 
\begin{corollary} \label{cor:SmallLoops}
For the topology of the vineyard of $d_\mathbb{E} (\cdot ,\gamma(t))^2 |_\M$ it only matters which connected component of the complement of the generalized symmetry set $\gamma$ intersects and in which order (with respect to $t$), the precise geometry of $\gamma$ does not matter (as long as it is generic in the sense that it only crosses the focal and/or symmetry transversely and outside the singularities of codimension $2$). In particular, if we consider a loop enclosing a singularity, we can take that loop to be as small as we like. 
\end{corollary}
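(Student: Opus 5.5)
The strategy is to break the vineyard $\mathcal{V}(\M,\gamma)$ into pieces according to the times at which $\gamma$ crosses $\operatorname{gsym}(\M)$. On each piece we will exhibit an isotopy that preserves $S^1\times\Delta$ and depends only on combinatorial data. Gluing these isotopies gives the result.

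\textbf{Setup.} Let $\gamma$ and $\gamma'$ be two generic loops that meet the same connected components of the complement of $\operatorname{gsym}(\M)$ in the same cyclic order. More precisely, I will require that they cross the same one-dimensional manifold pieces of $\operatorname{gsym}(\M)$ (arcs between codimension-$2$ points) in the same order; this is the honest meaning of ``which components and in which order'' when two components share several arcs. After reparametrizing, both loops cross the $j$-th arc at the same time $t_j$, and in between they stay in the same component $U_j$.

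\textbf{Away from crossings.} Each $U_j$ is open and connected, hence path connected. So on $[t_j+\varepsilon, t_{j+1}-\varepsilon]$ there is a homotopy of paths inside $U_j$ from $\gamma$ to $\gamma'$, with endpoints moving only near the crossing arcs. By Proposition~\ref{prop:ConstOrd}, every diagram along this homotopy is combinatorially equivalent. The births and deaths are distinct critical values of a Morse function and never coincide, so the orders of birth and death coordinates are fixed. Therefore the straight-line interpolation of points in $\mathbb{R}^2$ that keeps both coordinate orders gives a fiberwise isotopy of the vines over this interval. No point reaches $\Delta$, by the same argument as Corollary~\ref{Cor:NoMondromyInConnectedComponent}.

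\textbf{Near a crossing.} The crossing point lies on a single arc of type $A_1^2$ or $A_2$. Along such an arc, $d_{\mathbb{E}}(\cdot,p)^2|_\M$ together with a transverse parameter is a versal unfolding of the same normal form.
\begin{itemize}
\item For $A_1^2$, two critical values exchange order.
\item For $A_2$, a Morse pair is created or annihilated, so a vine leaves or meets $\Delta$.
\end{itemize}
These normal forms are locally trivial along the arc, by the Thom--Mather / versality argument of \cite{Bruce1985symmetry}. Hence sliding the crossing point along the arc, and pushing the transverse direction, gives an isotopy of the local vineyard piece over $[t_j-\varepsilon,t_j+\varepsilon]$ that maps $S^1\times\Delta$ to itself.

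\textbf{Gluing.} The isotopies of the two kinds agree on the overlaps, because both are determined by the combinatorial type there. Concatenating them gives an ambient isotopy of $S^1\times\mathbb{R}^2$ carrying $\mathcal{V}(\M,\gamma)$ to $\mathcal{V}(\M,\gamma')$. This proves the first claim.

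\textbf{Shrinking the loop.} Near a codimension-$2$ point $p$, the model of \cite{Bruce1985symmetry} says that $\operatorname{gsym}(\M)$ is diffeomorphic to a finite union of arcs emanating from, or passing through, $p$, i.e.\ locally conic. A loop that encircles $p$ once, without meeting other codimension-$2$ points, can therefore be homotoped radially to an arbitrarily small circle around $p$. This homotopy never changes the cyclic sequence of arcs crossed, so by the first part the vineyard is unchanged up to isotopy.

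\textbf{Expected obstacle.} The main difficulty is the crossing step. One must make precise that the local change of the vineyard is independent of where the arc is crossed, and that the local isotopies can be chosen uniformly so that they glue with the fiberwise interpolation inside the components. I would handle this using local triviality of the versal unfolding along each stratum, combined with Proposition~\ref{prop:ConstOrd} applied on both sides of the arc.
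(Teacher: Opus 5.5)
Your outline matches the paper's reasoning. The paper gives no separate proof and treats the corollary as an immediate consequence of Proposition~\ref{prop:ConstOrd}, i.e.\ fixed combinatorics on each component of the complement of $\operatorname{gsym}(\M)$. You go further by treating the crossings explicitly. Replacing ``components in order'' by ``arcs in order'' strengthens the hypothesis, so strictly you prove slightly less than stated. It is, however, exactly what the ``in particular'' clause needs. It also avoids a point the paper leaves implicit: that different arcs separating the same two components induce the same identification of vines. Two steps need repair.

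\textbf{The gluing step.} You assert that the interpolation isotopy and the versality isotopy ``agree on the overlaps because both are determined by the combinatorial type''. This is not correct as stated, for two reasons. An isotopy is not determined by combinatorial data. And versality only controls the germ at the contact points; it says nothing about the global elder-rule re-pairing (the knee) that an $A_1^2$ crossing can produce.

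Both problems disappear if you drop versality and use one interpolation for all $t$:
\begin{itemize}
\item After synchronizing the crossing times, match the points of $\mathrm{Dgm}(f_{\gamma(t)})$ and $\mathrm{Dgm}(f_{\gamma'(t)})$ by the ranks of their coordinates. Proposition~\ref{prop:ConstOrd} makes this matching consistent on each component.
\item So your homotopy of paths in $U_j$ is not needed. In fact, with endpoints pinned near the arcs, such a homotopy need not exist when $U_j$ is not simply connected.
\item Both loops cross the same arc at $t_j$. Convex combination therefore preserves every strict inequality and the single equality at $t_j$.
\item The pairing on either side is fixed by the component, so knees occur at $t_j$ in corresponding vines for both loops.
\item At an $A_2$ crossing the new point lies on $\Delta$ at $t_j$ for both loops, so its interpolation does too.
\end{itemize}
Straight-line interpolation is then directly a level-preserving isotopy of vineyards. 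It extends fibrewise to an ambient isotopy preserving $S^1\times\Delta$.

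\textbf{The shrinking step.} The loop must lie in a neighbourhood where the local model of \cite{Bruce1985symmetry} holds. Merely not meeting other codimension-$2$ points is not enough. The large loop in Figure~\ref{fig:a12a12_bigloop} encircles the $A_1^2/A_1^2$ point once and meets no codimension-$2$ point. Yet its vineyard has no monodromy, while the small loop's does.

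Even inside that neighbourhood, a loop that crosses some branch back and forth cannot be deformed to a small circle through loops with the same crossing sequence. The clean version is this: a loop crossing each branch at $p$ exactly once, in cyclic order, has the same crossing sequence as a small circle about $p$. Your first part then applies directly, and no radial homotopy is needed.
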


\section{The vineyard behaviour near the singularities in 2D 
}


In this section, we will investigate the behaviour of the vineyard induced by the family of sublevel filtrations of the function $d_\mathbb{E}^2(\cdot , \gamma(t) )|_\M$ for small observation loops $\gamma(t)$ that run around any $0$-dimensional singularity. By `any' we mean every type of singularity, which have been identified by Bruce, Giblin, and Gibson, and recalled in Section \ref{sec:Singularities_Plane}.
Additionally and in support of that goal, we will consider the behaviour for manifold pieces of the symmetry set and the focal set and also specifically describe the behaviour in the vineyard if $\gamma(t)$ is a curve (nonperiodic and the image of the unit interval~$[0,1]$).

\subparagraph{Interchanges} Before considering the topology of the vineyard of the function $d_\mathbb{E}^2(\cdot , \gamma(t) )|_\M$, we need to introduce some general nomenclature. 
We say that two points in a persistence diagram $\mathcal{V}_\ell(t^*)$ are \emph{interchanged} with period $2\pi$ if by following the vines $\mathcal{V} _1(t)$, $\mathcal{V}_2 (t)$ that go through these two points in the vineyard in the positive time direction for a duration of $2\pi$ the end point of the one vine is the starting point of the other, i.e. 
\begin{align}  
\mathcal{V} _1(t^*+ 2\pi) &= \mathcal{V}_2 (t^*) & 
\mathcal{V} _2(t^*+2\pi) &= \mathcal{V}_1 (t^*).
\nonumber
\end{align} 
Below we will see that such interchanges can be realized, see Figure \ref{fig:figure3ForA12A12Transition}. 
More generally, following the vines for a period (of $2 \pi$) induces a discrete map or relation between the points in the persistence diagram and the diagonal, mapping the points in the persistence diagram at time $0$ (or diagonal) to the points in the persistence diagram at time $2\pi$ (or the diagonal). 
If the vineyard stays away from the diagonal (and is generic, as we assumed in general), this map is a bijection, while generally many points can be mapped to the diagonal and as many can emanate from the diagonal making this a relation. 

\subparagraph{Interchanges and the symmetry set} 
Returning to the specific setting of $d_\mathbb{E}(\cdot , \gamma(t) )^2|_\M$, we make the following observation: 
\begin{observation}\label{ob:$A_1^2$ intersection} Assuming that the vineyard stays clear from the diagonal (that is $\gamma(t)$ does not cross the focal set), if there is an interchange between two points in the vineyards $d_\mathbb{E}(\cdot , \gamma(t) )|_\M$, then $\gamma(t)$ intersects at least two different strata of the symmetry set (which does not include the focal set), which correspond to a birth-birth and a death-death $A_1^2$ contact respectively. Here the strata are the connected components of the generalized symmetry set minus its codimension $2$ or $0$ dimensional singularities. 
\end{observation}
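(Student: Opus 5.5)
The plan is to follow the two interchanged vines through one full period and apply the intermediate value theorem separately to their birth coordinates and to their death coordinates.

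\emph{Setup.} Fix $t^*$ and let $z_1=(b_1,d_1)$ and $z_2=(b_2,d_2)$ be the two interchanged points of the diagram at $t^*$, with vines $\mathcal{V}_1(t)=(\beta_1(t),\delta_1(t))$ and $\mathcal{V}_2(t)=(\beta_2(t),\delta_2(t))$. Since $\gamma$ does not cross the focal set, the vineyard stays off $\Delta$. Both vines are therefore defined on all of $[t^*,t^*+2\pi]$, and $\beta_i,\delta_i$ are continuous by stability of persistence diagrams. By the genericity assumption (no multiplicity in $\mathrm{Dgm}(f_{t^*})$) we have $z_1\neq z_2$. In the generic setting the critical values of $f_{t^*}$ are distinct, so in fact $b_1\neq b_2$ and $d_1\neq d_2$. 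Without loss of generality $b_1<b_2$.

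\emph{Births.} The interchange gives $\beta_1(t^*+2\pi)=b_2$ and $\beta_2(t^*+2\pi)=b_1$. Hence the continuous function $\beta_1-\beta_2$ changes sign on $[t^*,t^*+2\pi]$. By the intermediate value theorem there is a time $t_b$ with $\beta_1(t_b)=\beta_2(t_b)$. The next step is to interpret this coincidence.
\begin{itemize}
\item Each birth coordinate of a point in the (extended, degree $\ell$) diagram of the Morse function $f_{t_b}$ is the value of a critical point of the birth type, that is, a local minimum for the upward sweep.
\item Critical points of $f_{t}$ vary continuously along $\gamma$ away from the focal set, so the two vines are carried by two distinct critical points.
\item The equality $\beta_1(t_b)=\beta_2(t_b)$ therefore means that a single circle centred at $\gamma(t_b)$ is tangent to $\M$ at two distinct local minima.
\end{itemize}
This is exactly an $A_1^2$ contact of birth-birth type, so $\gamma(t_b)$ lies on a birth-birth stratum of the symmetry set.

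\emph{Deaths.} The same argument applied to $\delta_1-\delta_2$ produces a time $t_d$ at which two local maxima share a critical value. Thus $\gamma(t_d)$ lies on a death-death stratum. By Proposition~\ref{prop:ConstOrd}, the birth-birth, birth-death and death-death labels are constant along a manifold piece of the symmetry set. A single stratum therefore cannot be both birth-birth and death-death, so the two strata are different.

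\emph{Main obstacle.} The delicate point is justifying that the birth (respectively death) coordinate of a vine is realised by a minimum (respectively maximum), continuously in $t$, including for extended pairs and for pairs that cross the diagonal side. The worry is that the coincidence $\beta_1=\beta_2$ might come from a birth-death crossing in which a pairing swaps. The plan for this step is as follows.
\begin{itemize}
\item Use that off the focal set the critical points form continuous, pairwise disjoint curves on $\M$, and that their Morse type (minimum or maximum) is locally constant.
\item The persistence pairing only reassigns which minimum is matched with which maximum, and this happens at birth-death crossings.
\item Such a reassignment never makes a coordinate jump, because it occurs exactly when the two values involved coincide.
\end{itemize}
Consequently a birth coordinate, which is always the value of some minimum, can coincide with another birth coordinate only when two minima have equal value. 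That is precisely a birth-birth crossing, and the symmetric statement for maxima handles the death coordinates.
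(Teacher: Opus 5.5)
Your argument is correct and is essentially the paper's proof: the intermediate value theorem, applied separately to the birth and death coordinates of the two interchanged vines, yields one birth-birth and one death-death crossing of the symmetry set, and you supply justification that the paper leaves implicit. One correction to that justification: elder-rule re-pairings (knees) occur at birth-birth or death-death crossings, not birth-death ones, and a vine's birth critical point can switch at such a knee. So continuity of critical points is not the right reason that $\beta_1(t_b)=\beta_2(t_b)$ forces two distinct minima to share a value. The reason is simply that at each fixed time the extended pairing is a matching of critical points, with all births at minima in degree $0$; in degree $1$ they are at maxima, which only swaps the two labels.
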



\begin{proof} 
Let $\mathcal{V} _1(t)$, $\mathcal{V}_2 (t)$ be the two vines that are involved in the interchange. We then consider the projection of $\mathcal{V} _1(t)$ and $\mathcal{V}_2 (t)$ to the birth and death coordinates. The interchange of the points in the persistence diagram implies in particular that the birth and death values of $\mathcal{V} _1(t)$ and $\mathcal{V}_2 (t)$ interchange on the interval $[0,2\pi]$. By the intermediate value theorem \cite{rudin1976principles} 
there have to be times $t_1$ and $t_2$ for which the birth, respectively death values coincide. The point $\gamma(t_1)$ ($\gamma(t_2)$) lies on a birth-birth stratum 
(death-death stratum respectively) of the symmetry set. 
\end{proof} 

\begin{corollary}
\label{cor:interchange} Note that this immediately implies that at least $4$ contact points are involved in an interchange. Moreover, these contact points should be associated to the two persistence pairs involved in the interchange.
\end{corollary}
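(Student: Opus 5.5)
The corollary follows directly from Observation~\ref{ob:$A_1^2$ intersection} and the definition of the symmetry-set strata, so the plan is to unpack that observation rather than build a new argument. Let $\mathcal{V}_1(t),\mathcal{V}_2(t)$ be the two interchanged vines, and write $(b_1(t),d_1(t))$ and $(b_2(t),d_2(t))$ for their coordinates. By the observation there are times $t_1,t_2$ with $b_1(t_1)=b_2(t_1)$ and $d_1(t_2)=d_2(t_2)$. At such times $\gamma(t_1)$ lies on a birth--birth $A_1^2$ stratum and $\gamma(t_2)$ on a death--death $A_1^2$ stratum.

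The first step is to identify the contact points at each of these times. Off the focal set, every point of the diagram corresponds to a pair of nondegenerate critical points of $f_t$, and every critical point of $f_t$ is exactly an ordinary ($A_1$) tangency of a circle centred at $\gamma(t)$ with $\M$. The equality $b_1(t_1)=b_2(t_1)$ is an equality of two critical values. The two circles of squared radius $b_1(t_1)$ and $b_2(t_1)$ therefore coincide, giving one circle with two minimum-type tangencies. Call these $x_1(t_1)$ and $x_2(t_1)$: they are the birth critical points of the two pairs. Likewise, at $t_2$ one circle has two maximum-type tangencies $y_1(t_2)$ and $y_2(t_2)$, the death critical points of the two pairs. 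The genericity assumption forces these two points to be distinct, since a single critical point cannot realise two distinct diagram points. This is the main thing to check carefully.

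The second step shows that the four points $x_1(t_1),x_2(t_1),y_1(t_2),y_2(t_2)$ are four distinct contacts. Minima and maxima are different critical points, and since $\gamma$ avoids the focal set, the nondegenerate critical points of $f_t$ move continuously in $t$ and never change type. Hence the birth tangencies $x_i$, followed along the loop, remain minima. They can therefore never coincide with the death tangencies $y_j$, which remain maxima. This yields at least four contact points in total: two on the birth--birth circle at time $t_1$ and two on the death--death circle at time $t_2$.

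Each of these contacts is, by construction, the birth or death critical point of $\mathcal{V}_1$ or $\mathcal{V}_2$. So the contacts are associated to exactly the two persistence pairs involved in the interchange, which is the second claim. I do not expect a real obstacle here. The only subtle point is the continuity and type-preservation of critical points off the focal set, and that is the Morse-stability already used in the proof of Proposition~\ref{prop:ConstOrd}.
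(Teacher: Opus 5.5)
Your proposal is correct and follows the paper's own route: the corollary is just the preceding observation on interchanges unpacked, with the birth--birth crossing at $t_1$ and the death--death crossing at $t_2$ each contributing two $A_1$ contacts that are the birth, respectively death, critical points of the two interchanged pairs. Your extra argument for distinctness (minimum branches stay minima off the focal set) is a fine addition. One small correction: the two birth critical points at $t_1$ differ because the persistence pairing in a fixed degree is a matching of critical points, not because of the multiplicity-one genericity assumption.
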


We can make two further observations that will be critical in our analysis below. 
\begin{observation}\label{obs:Isolation}
We know that in the generic setting the dimension $0$ singularities are isolated (in the sense that there is a lower bound on the distance from a given singularity to any other $0$-dimensional singularity), this can be seen explicitly in the classification that was recalled in Section \ref{sec:Singularities_Plane}. Moreover, we have that persistence diagrams are Lipschitz with respect to the input, see \cite{CohenSteinerStability}. This means that if we consider a small loop $\gamma(t)$, that is contained in a small ball, where here small means small compared to the distance between singularities and the Lipschitz constant, and $\gamma(t)$ crosses the focal set, then the points in the persistence diagram that come out of (or go into) the diagonal when doing so will have less persistence than all other points in the persistence diagram. This means that if $\gamma(t)$ is sufficiently small, the extended vine with the small persistence is separate from the other vines, that is, it is unlinked. 

We stress that it is possible for two persistence points to come out of (or go into) the diagonal near a singularity of type $A_2/A_2$.  In this case there are two vines with low persistence but with different birth/death times (which are almost equal). More precisely we can assume that the difference in birth/death times are significantly smaller than the persistence.  All other points in the persistence diagram can be assumed to have much larger persistence so the corresponding vines are unlinked from each other and the other vines. 
\end{observation}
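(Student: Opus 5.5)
The plan is to compare all diagrams along $\gamma$ with the single diagram at the singular point, using the stability theorem. Let $p$ be the $0$-dimensional singularity enclosed by $\gamma$, and suppose $\gamma\subset B(p,\varepsilon)$. Since $|d_\mathbb{E}(x,y)^2-d_\mathbb{E}(x,p)^2|\le C|y-p|$ uniformly for $x\in\M$ (with $C$ depending on $\operatorname{diam}(\M)$ and $|p|$), stability~\cite{CohenSteinerStability} gives $d_B(\mathrm{Dgm}(f_t),\mathrm{Dgm}(f_p))\le C\varepsilon$ for all $t$.

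\textbf{Step 1: a persistence gap at $p$.} I will use the genericity of $\M$ in the sense of~\cite{Bruce1985symmetry}. For $f_p=d_\mathbb{E}(\cdot,p)^2|_\M$, the only coincidences of critical values are those prescribed by the singularity type of $p$ (for instance the equal radii in $A_1^2$ contacts), and the only degenerate critical points are the $A_2$ or $A_3$ contacts of that type. Hence there is a $\delta>0$ such that every pair in $\mathrm{Dgm}(f_p)$ that does not come from a degenerate contact has persistence at least $\delta$. Moreover, distinct relevant radii, such as $r_1\neq r_2$ in $A_2/A_2$, differ by at least $\delta$. Choosing $\varepsilon\ll\delta/C$, each such pair persists along $\gamma$ with persistence at least $\delta-2C\varepsilon$.

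\textbf{Step 2: the newborn pairs are tiny and pair with each other.} Near an $A_2$ contact, at parameter value $\tau_0$ with radius $r$, I will use the versal unfolding $f_y(\tau)\sim \tau^3+a(y)\tau+c(y)$, where $a$ vanishes on the focal branch and $|a(y)|\le C'\varepsilon$. When $a<0$ there are two Morse critical points, a neighbouring minimum and maximum, whose critical values differ by $O(|a|^{3/2})=O(\varepsilon^{3/2})$. These values lie within $O(\varepsilon)$ of $r^2$, and by Step~1 no other critical value of $f_y$ lies between them. Since $\M$ is one-dimensional, the elder rule (and its relative, downward analogue in extended persistence) therefore pairs the new minimum with the adjacent maximum. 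This gives a point of persistence $O(\varepsilon^{3/2})$ near $(r^2,r^2)$, and this is exactly the point that enters or leaves $\Delta$ when $\gamma$ crosses the focal set. For $A_3$ the same argument applies with the unfolding $\pm\tau^4+a\tau^2+b\tau$: the small pairs again have persistence $o(1)$ as $\varepsilon\to0$. For $A_2/A_2$ one obtains two such tiny pairs, located near the distinct diagonal points $(r_1^2,r_1^2)$ and $(r_2^2,r_2^2)$.

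\textbf{Step 3: unlinking.} In $S^1\times\mathbb{R}^2$, each small extended vine lies in a tube $S^1\times U$, where $U$ is a neighbourhood of width $O(\varepsilon)$ around a point of $\Delta$. By Step~1 all other vines avoid the $\delta/2$-neighbourhood of $\Delta$. Inside such a tube, the extended vine is a loop which meets each time slice in either a single point or in a small arc of $\Delta$. I will straight-line isotope it, keeping it inside the tube and preserving $S^1\times\Delta$, onto the model $\{(t,(j,j+f(t)))\}$ of Definition~\ref{def:canonVineyard}, after rescaling time. Because the tube is disjoint from the other vines, this isotopy extends by the identity away from the tube.

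In the $A_2/A_2$ case, the two tubes are centred at diagonal points separated by about $|r_1^2-r_2^2|\ge\delta$, while each has width $O(\varepsilon)$, so they are disjoint and the two small vines are unlinked from each other as well. (The observation's phrase about the birth/death offset compared with persistence should be read as relative to the $\delta$-scale; I would state it in this separated form.)

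\textbf{Main obstacle.} I expect Step~2 to be the hardest part. It requires justifying the pairing uniformly along the whole loop, including extended and relative pairs and near $A_1A_2$ or $A_1^2/A_2$ points, where a coincident radius involves a non-degenerate contact. My first attempt would be to show that the coincident contact is a Morse critical point at distance $\gtrsim\delta$ along $\M$ from the degenerate one. Then the merge and split events of the sublevel and superlevel sets near $\tau_0$ are local, and the adjacent min–max pairing follows directly.
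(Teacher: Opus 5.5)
Your proposal is correct and follows the same route as the paper. The paper's only justification is the sketch inside the observation itself: isolation of the singularity plus Lipschitz stability of diagrams, which separates the near-diagonal newborn pairs from all other points. Your three additions make that sketch rigorous: the persistence gap in $\mathrm{Dgm}(f_p)$, the unfolding-based localization of the small vines near $(r^2,r^2)$ (this is what actually supports the $A_2/A_2$ separation, read as you suggest), and the tube isotopy. The pairing issue you flag near $A_1A_2$ can be sidestepped: the birth and death of any small point are critical values of $f_{\gamma(t)}$, and by compactness these lie within $o(1)$ of the finitely many distinct critical values of $f_p$, so localization and unlinking do not require knowing which minimum is paired with which maximum.
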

In generalization of Proposition \ref{prop:ConstOrd} and Corollary \ref{Cor:NoMondromyInConnectedComponent}, we have the following:  
\begin{observation}\label{Obs:NoLinkingSymmetry}
Let $\gamma(t)$ be a loop and let $\mathcal{V}_i$ be vines and thus $\mathcal{V}_i(t)$ be points in the persistence diagram, where $i \in I$ indices the vines. We say that $\mathcal{V}_i (t)$ is not associated to any of the crossings of the symmetry set or focal set of $\gamma$, if the projections of $\mathcal{V}_i (t)$ to the birth and death values give unique values and are distinct from all other birth and death values of $\mathcal{V}_j (t)$, with $i \neq j$. If this holds for all $i \in I$, we say that all the vines in this set are not associated to the generalized symmetry set. We note that as in Proposition \ref{prop:ConstOrd} these points $\mathcal{V}_i (t)$, $i \in I$ are combinatorially equivalent. And as in Corollary \ref{Cor:NoMondromyInConnectedComponent}, these vines form unlinked circles and the vines do not exhibit monodromy. 

More generally, we say that $\mathcal{V}_i (t)$ is not associated to any of the birth (respectively death) crossings of the symmetry set or focal set of $\gamma$, if the projections of $\mathcal{V}_i (t)$ to the birth and death values give unique values and distinct from all other birth (respectively death) values of $\mathcal{V}_j (t)$, with $i \neq j$.  If this holds for all $i \in I'$ with $I' \subseteq I$ a subindex set, we say that all the vines in this subset are not associated to the birth (respectively death) part of the generalized symmetry set. This means that projection of the $\mathcal{V}_j (t)$ to the birth (respectively)  death values are distinct for all $t$ and thus the vines are unlinked and there is no monodromy.
\end{observation}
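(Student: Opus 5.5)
The argument reduces everything to the birth coordinate (or the death coordinate in the second version) and uses the intermediate value theorem exactly as in the proof of Proposition~\ref{prop:ConstOrd}. For each vine $\mathcal{V}_i$, $i\in I'$, write $\mathcal{V}_i(t)=(b_i(t),d_i(t))$. On the parts of the vine lying on $S^1\times\Delta$ (the extension of Definition~\ref{def:canonVineyard}) we have $b_i=d_i$. I will read the hypothesis as saying that the functions $b_i$, $i\in I'$, are pairwise distinct at every $t$, including along these extensions.

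\emph{Step 1 (constant order).} Each $b_i$ is continuous in $t$, by stability of persistence diagrams~\cite{CohenSteinerStability} together with the continuity of the extension. Suppose the order of $b_i(t)$ and $b_j(t)$ changed between two times. Then by the intermediate value theorem there would be a time with $b_i(t)=b_j(t)$, contradicting the hypothesis. Hence the strict order of $\{b_i(t)\}_{i\in I'}$ is independent of $t$. In the first version of the statement, where both birth and death values are distinct, the same argument applied to the $d_i$ also fixes the death order. This gives the combinatorial equivalence exactly as in Proposition~\ref{prop:ConstOrd}.

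\emph{Step 2 (no monodromy).} Following the vines once around $S^1$ induces a map $\sigma$ on the finite set $\{\mathcal{V}_i(t^*)\}_{i\in I'}$. Suppose some vine were a multiple cover, or two vines were interchanged. Then $\sigma(i)=j\neq i$ for some $i$, so the continuous function $b_i$ started at $t^*$ would end at $b_j(t^*)$ after time $2\pi$. But Step 1 says $\sigma$ preserves the strict order of the finite set $\{b_i(t^*)\}$, so $\sigma$ is an order-preserving bijection of a finite totally ordered set and hence the identity. Therefore every extended vine meets each time slice exactly once.

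\emph{Step 3 (unlinking).} I would build an isotopy of $S^1\times\mathbb{R}^2$ preserving $S^1\times\Delta$ in two stages.
\begin{itemize}
\item For each $t$, let $\phi_t:\mathbb{R}\to\mathbb{R}$ be the increasing piecewise linear homeomorphism sending $b_i(t)$ to the integer given by the rank of $i$, and extended linearly at infinity. The map $\phi_t$ depends continuously on $t$ by Step 1, and it is isotopic to the identity through the convex combinations $(1-s)\mathrm{id}+s\phi_t$.
\item Apply $(t,b,d)\mapsto(t,\phi_t(b),\phi_t(d))$. Because the same map acts on both coordinates, the diagonal is sent to itself.
\end{itemize}
After this first stage each vine has constant birth coordinate equal to its rank, so different vines lie in different planes $\{b=\text{const}\}$ and are graphs over $S^1$. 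Within each such plane I then straighten the death coordinate. For vines that never touch $\Delta$, the straight-line homotopy to $d=i+1$ works, since $d>b$ throughout. For vines that do touch $\Delta$, I use a homotopy to the model $j+f(t)$ that keeps $d\ge b$, after a reparametrization of $t$ moving the off-diagonal interval to $[0,1]$. Since the vines stay in disjoint planes during this second stage, no crossings occur, and the result is the canonical vineyard of Definition~\ref{def:canonVineyard}.

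\emph{Main obstacle.} The delicate point is Step 3 near the diagonal. The isotopy must keep $S^1\times\Delta$ invariant, and the extended, on-diagonal segments must also respect the distinctness hypothesis. Using the same $\phi_t$ on both coordinates resolves the first issue. The second issue is where the hypothesis must really be read as including the extensions. If it is not, I would first use Observation~\ref{obs:Isolation} to perturb the diagonal segments so that their birth values remain distinct from those of all other vines, and then run the argument above.
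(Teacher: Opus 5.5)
Your proposal is correct and follows the paper's route. The paper justifies this observation only by the intermediate-value argument of Proposition~\ref{prop:ConstOrd} (distinct birth, resp.\ death, values force a constant order) followed by an appeal to Corollary~\ref{Cor:NoMondromyInConnectedComponent}; your order-preserving-permutation argument in Step~2 and your diagonal-preserving isotopy in Step~3 make explicit what the paper leaves implicit.

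One caution: requiring distinctness along the diagonal extensions is genuinely needed, not a convenience. Two diagonal-touching vines whose off-diagonal time intervals overlap in two pieces, with opposite birth orders on the two pieces, have distinct births whenever both are present, yet the difference of their extended vines winds once around the origin. So the perturbation fallback via Observation~\ref{obs:Isolation} is justified only in the small-loop regime, not in general.
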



\subsection{Type \texorpdfstring{$A_1^2$}{A12}} 

As we have seen in Section \ref{sec:Singularities_Plane}, if $p$ lies on a singularity of type $A_1^2$ on the symmetry set this means that there is a circle centred at $p$ that is tangent to $\M$ at two different points, i.e. there are two contact points. Moreover, there is no other circle centred at $p$ that has either multiple tangencies, or higher order ($A_2$ or $A_3$) contact points. 
The two contact points on the circle can each correspond to either a birth or a death of a cycle. 
As mentioned in Section \ref{sec:Singularities_Plane}, we refer to this part of the symmetry set or stratum as a birth-birth, death-death or birth-death stratum or piece depending on the nature of the critical points. 
With a little abuse of notation we will write $C^1$, $C^2$ 
for both the critical values and the critical points, and in case we specify if it is a birth-birth, death-death, or birth-death stratum, we will denote both the critical points and the values by $b^1$ and $b^2$, $D^{I}$ and $D^{II}$, and $b^1$ and $D^{II}$, respectively. If the critical values and, in particular, the birth or death values or the associated critical points depend on a parameter $t$ this is indicated by adding $(t)$, that is $C_1(t)$, $b^{1}(t)$, $D^{I}(t)$, etc.

We now first consider a small/short smooth curve $\gamma(t)$, which crosses the symmetry set transversally. As $\gamma (t)$ crosses the symmetry set, there are two things that happen or can happen to the sublevel sets of $d_\mathbb{E} (\cdot , \gamma(t) ) |_\M$:
\begin{itemize} 
\item \textbf{Reordering.} 
It always happens that critical values $C^1 (t)$, $C^2(t)$ interchange order, which in topological terms means that the births or deaths of the cycles occur in a different order than before. 
\item \textbf{Elder rule.} Due to the elder rule, it is possible that the pairing between the critical points into persistence pairs changes. For example, if before the crossing the symmetry set at a birth-birth stratum, the persistence points $(b^1(t) , D^K(t) )$, $(b^2(t) , D^L(t))$ are in the persistence diagram, then after the crossing, the pairing between births and deaths may yield $(b^2(t) , D^K(t) )$, $(b^1(t) , D^L(t))$, where $D^K(t)$ and $D^L(t)$ are deaths uninvolved in the crossing. 
\end{itemize} 

The involvement of the elder rule leads to so-called knees \cite{CohenSteiner2006}, and we note that these knees play an essential role in the higher order medial axes and Faustian interchanges of \cite{edelsbrunner2025mid, ElizabethThesis}.

We make two additional observations: 
Firstly, the second case 
requires very specific global configurations. 
Secondly, the second case, gives rise to knees, that is, points where the vines in the vineyard are not smooth. These have been classified in~\cite{CohenSteiner2006}.  

Now we consider a (small) loop $\gamma(t)$, centred on $p$ (the point on the symmetry set, mentioned before), which therefore intersects the symmetry set twice, transversally and in opposite directions. Because we transverse the same stratum of the symmetry set in opposite directions, both change the reordering and the elder rule change (if the latter occurs) will be undone as you cross the symmetry set for the second time. 
This means in particular that there is no monodromy when you follow the loop $\gamma(t)$, which we already knew because the conditions of Observation \ref{ob:$A_1^2$ intersection} and Corollary \ref{cor:interchange} are not satisfied. We note also that the vines are unlinked, thanks to Observation \ref{Obs:NoLinkingSymmetry}. 
We refer to Figure \ref{fig:A_12 Singularity}.


\begin{figure}[h!]
\centering
\includegraphics[width=0.80\linewidth]{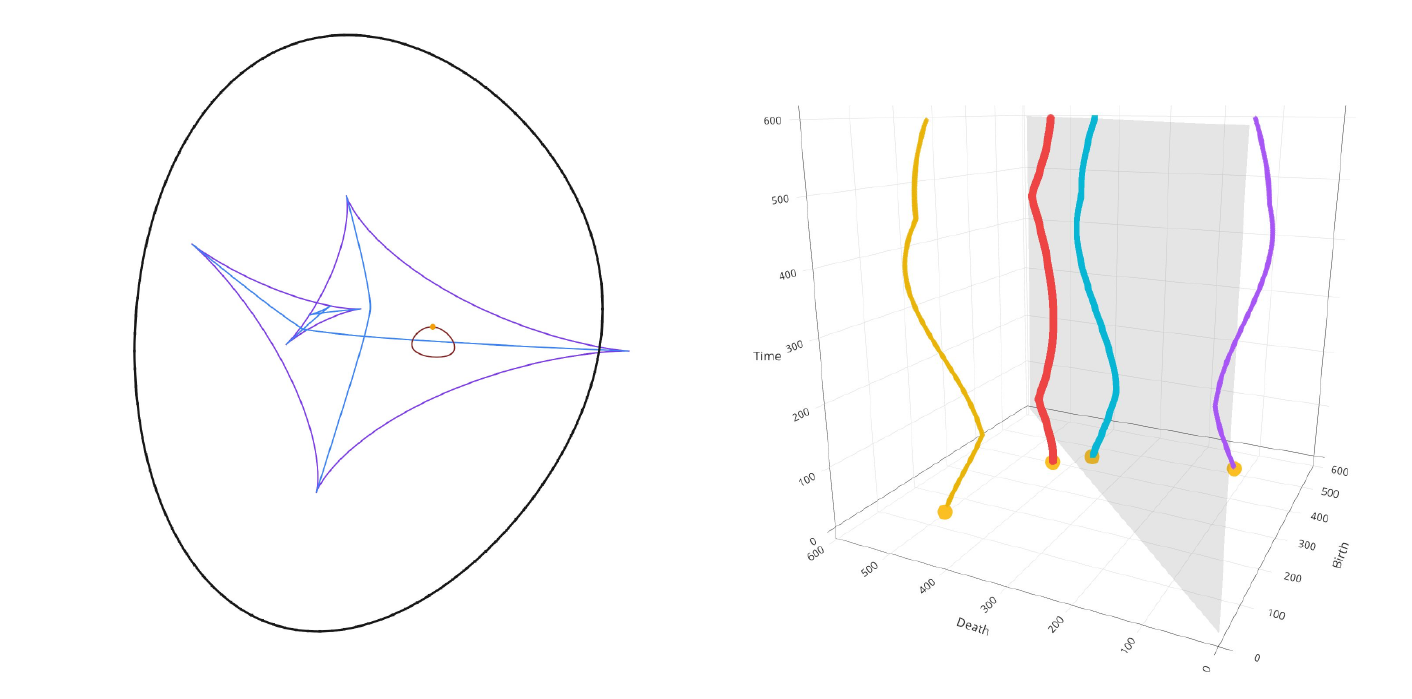}

\caption{A loop $\gamma$ enclosing the symmetry set or an $A_1^2$ singularity (left)  and the associated vineyard (right). }
\label{fig:A_12 Singularity}
\end{figure}

\subsection{Type \texorpdfstring{$A_1^2/A_1^2$}{A12A12}} \label{sec:A12A12}
As mentioned in Section \ref{sec:Singularities_Plane}, this is a transversal intersection point $p$ of two manifold pieces of the symmetry set. We will refer to the $4$ manifold pieces of the symmetry set that arise by removing the intersection point $p$ as the $4$ segments. As these $4$ pieces are each of type $A_1^2$, we can find the behaviour described above for each of them. 

The $4$ contact points on the two circles can each correspond to either a local minimum or a local maximum of the distance function $d_\mathbb{E} (\cdot,p)^2 |_\M$. At the level of persistence, this means that the contacts can each correspond to either a birth or a death of a cycle. 

We will now concentrate on exhibiting that singularities of type $A_1^2/A_1^2$ can indeed induce monodromy: 
In order to create an interchange, the two contact points on the smaller of the two concentric circles need to correspond to births and the two contact points on the larger of the two concentric circles need to correspond to deaths, because of Corollary \ref{cor:interchange}. In other words, in the case of an interchange, the symmetry set is locally a transversal crossing of a birth-birth and death-death manifold piece, see Figure \ref{fig:figure1ForA12A12Transition}.  

\begin{figure}[h!]
    \centering
    \includegraphics[width=0.65\linewidth]{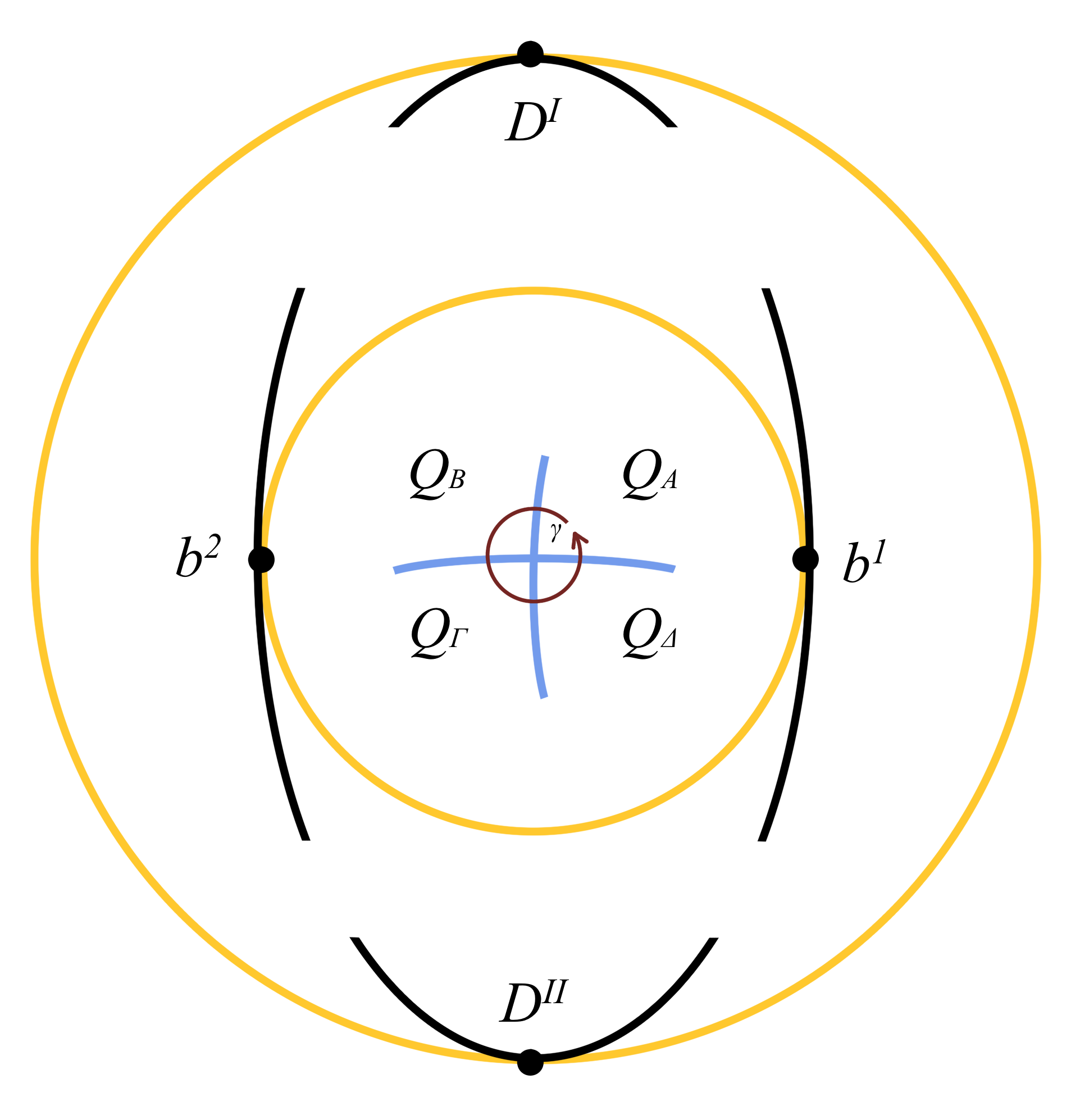}
    \caption{Some necessary conditions for the configuration of $\M$ (in yellow) with an interchange. The observation loop is indicated in blue the different quadrants with respect to the local structure of the symmetry set are indicated in reddish brown. 
    \label{fig:figure1ForA12A12Transition} } 
\end{figure}

We'll now consider a small loop $\gamma(t)$ that encircles the $A_1^2/A_1^2$ singularity. Near the $A_1^2/A_1^2$ singularity, the symmetry set divides the space into 4 connected components, which, because the self-crossing of the symmetry set near the $A_1^2/A_1^2$ singularity is transversal, we'll refer to as quadrants, called the $\Alpha$, $\Beta$, $\Gamma$ and $\Delta$ quadrants and denoted by $Q_\Alpha$, $Q_\Beta$, $Q_\Gamma$, and $Q_\Delta$. With a little abuse of notation, we refer to the contact points corresponding to births by $b^1$ and $b^2$ and use the same notation for the value of the function (that is, the radius). Similarly, we refer to both the contact points and function values corresponding to death by $D^I$ and $D^{II}$. 

Again, thanks to Corollary \ref{cor:interchange} the births at $b^1$ and $b^2$ and deaths at $D^I$ and $D^{II}$ need to be paired into persistence points $(b^i,D^J)$, and these values cannot be paired with other critical points. We stress that this does not mean that the geometric pairing of the points needs to be constant: Thanks to the elder rule, the pairing can change upon crossing the symmetry set, as we have seen in the discussion of the (co)dimension $1$ singularity $A_1^2$. This will in fact play an important role in the rest of the discussion. 


If we now return to the geometry we observe that in the $\Alpha$ quadrant the cycle born at $b^1$ (or more precisely there is a cycle born very close to $b^1$) is born before the cycle born at $b^2$  (or more precisely there is a cycle born very close to $b^2$), assuming the configuration is as depicted in Figure \ref{fig:figure1ForA12A12Transition}. We will ignore the fact that the contact point moves a little with the point on observation loop $\gamma(t)$, and just refer to $b^i$, $D^J$ respectively. Moreover, in the $\Alpha$ quadrant the cycle that dies at $D^I$ dies before the cycle that dies at $D^{II}$.
In the $\Beta$ quadrant $b^2$ is born before $b^1$, that is, the order changes upon crossing the symmetry set between the $\Alpha$~and~$\Beta$ quadrants. In the $\Beta$ quadrant the cycle that dies at $D^I$ dies before the cycle that dies at $D^{II}$ (as is the case for the $\Alpha$ quadrant). Upon crossing to the $\Gamma$ quadrant order in which $D^{I}$ and $D^{II}$ die is reversed, i.e. $D^{II}$ dies before $D^{I}$. The order of the births in the $\Gamma$ quadrant is the same as the order in the $\Beta$ quadrant, that is $b^2$ is born before $b^1$. 
In the transition from the $\Gamma$ quadrant to the $\Delta$ quadrant the order of births is interchanged from $b^2$ before $b^1$ to $b^1$ before $b^2$, while the death order remains invariant, i.e. $D^{II}$ dies before $D^{I}$. Finally in the transition back from the $\Delta$ quadrant to the $\Alpha$ quadrant the order of the deaths are again reversed, that is in the $\Delta$ quadrant $D^{II}$ dies before $D^{I}$, while in the $\Alpha$ quadrant $D^{I}$ dies before $D^{II}$. 

Roughly speaking, if we parametrize $\gamma(t)$ by $(\eta  \cos(t),\eta \sin (t))$, with $\eta$ a small constant and where we assume that the $A_1^2/A_1^2$ singularity is at the origin, we have up to reparametrization of the time (which we'll in this case view as an element of $\mathbb{S}^1$) and up to leading order in $\eta$ that 
\begin{align} 
b^1 (t) &\simeq r_1- L_1\eta \cos(t)
& b^2 (t) &\simeq r_1+ L_2 \eta \cos(t)
\nonumber
\\ D^{I} (t) &\simeq r_2- L_3 \eta \sin(t)
& D^{II} (t) &\simeq r_2+ L_4 \eta \sin(t)
\nonumber
\end{align}
where $r_1$ is the radius of the smaller concentric circle and $r_2$ is the radius of the larger concentric circle and $L_i>0$, with $i= 1,\dots, 4$, are (Lipschitz) constants.\footnote{The Lipschitz constants exist due to the fact that outside the focal set, 
the natural map from the normal bundle to Euclidean space is a local diffeomorphism, but whose differential is not (generally) equal to the identity, c.~f.~\cite[Section 10.4]{DoCarmoRiemannian}.}

Now we need to consider the pairing of the births and deaths. If the pairing is constant throughout, that is, the relevant points in the persistence diagram are $(b^1(t),D^{I}(t))$ and $(b^2(t),D^{II}(t))$ or $(b^1(t),D^{II}(t))$ and $(b^2(t),D^{I}(t))$, then this does not lead to monodromy, as following the vine in the vineyard for a period $2\pi$ will give the same as the starting point. 
However, because $D^{I} = D^{II}$ at the transitions between the $\Beta$ and $\Gamma$ quadrants and the $\Delta$ quadrant and $\Alpha$ quadrants, in principle, the pairing could change thanks to the elder rule. Similarly, because $b^1 = b^2$ at the transitions between the $\Alpha$ and $\Beta$ quadrants and the $\Gamma$ quadrant and $\Delta$ quadrants, in principle, the pairing could change thanks to the elder rule. 

\begin{figure}[ht!]
    \centering
    \includegraphics[width=0.6\linewidth]{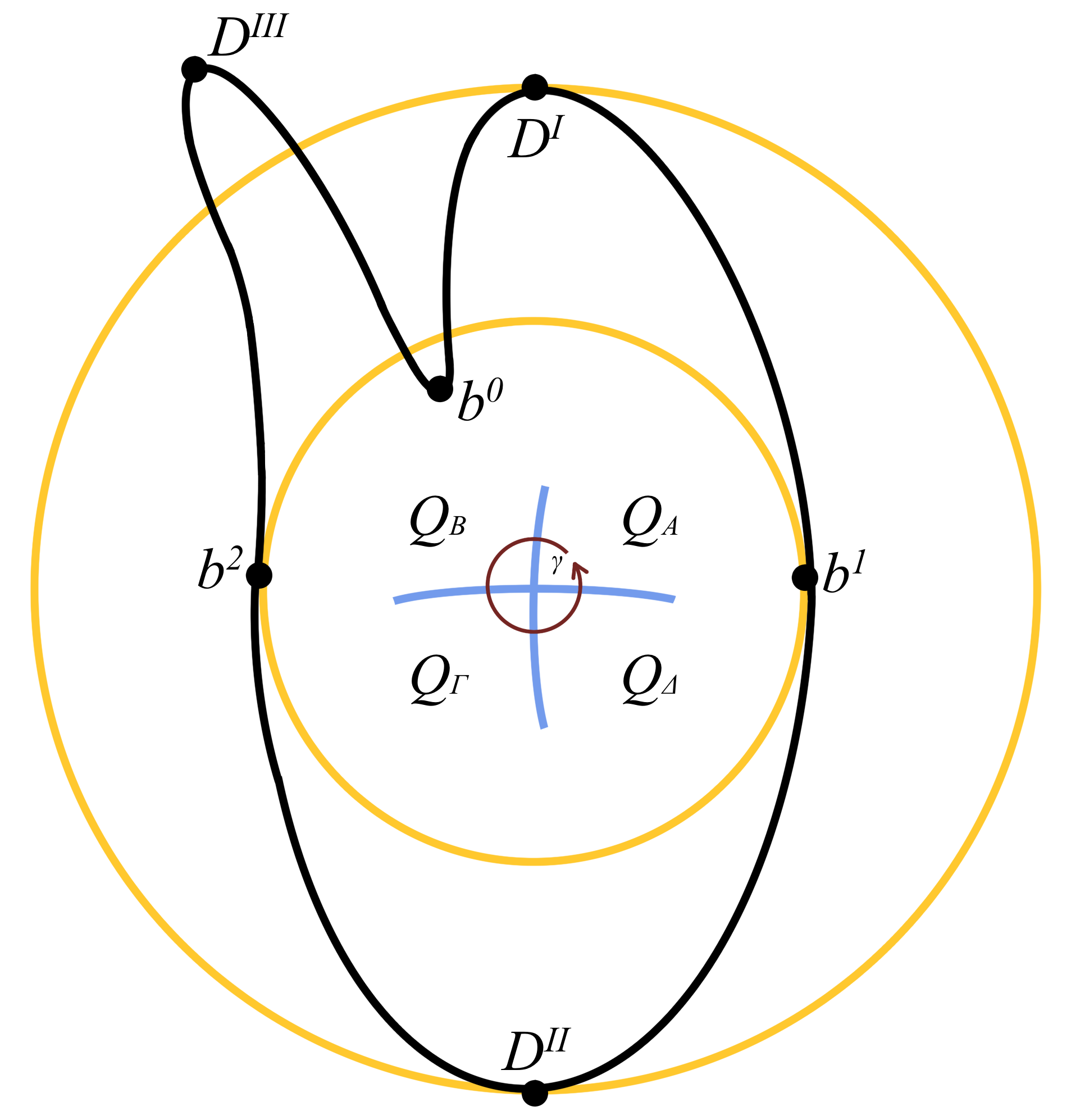}
    \caption{The part of the symmetry set depicted in the figure is not to scale. 
    \label{fig:figure2ForA12A12Transition} } 
\end{figure}

The change of pairing means that the points in the persistence diagram start moving in the opposite direction. To put it differently, with the pairing  $(b^1(t),D^{I}(t))$ and $(b^2(t),D^{II}(t))$ the points in the persistence diagram move (up to translation and up to leading order in $\eta$) as 
\begin{align}
&(- L_1 \eta \cos(t),- L_3 \eta \sin(t)) & & (L_2 \eta \cos(t), L_4 \eta \sin(t)), 
\nonumber
\end{align} 
that is counter clockwise,
while with the pairing $(b^1(t),D^{II}(t))$ and $(b^2(t),D^{I}(t))$
the points in the persistence diagram move (up to translation) as 
\begin{align}
&(- L_1 \eta \cos(-t),- L_4\eta \sin(-t)) & & ( L_2 \eta \cos(-t), L_3 \eta \sin(-t)), 
\nonumber
\end{align} 
where we used that $\cos(t) =\cos(-t)$, that is clockwise. Hence at the change of a pairing the direction of the points in the persistence diagram changes. 
The transitions between these pairings give rise to knees in the vineyard.

We now observe that without a change of direction, the points in the persistence diagram end up in their original position after a period of $2\pi$, while at the same time changing direction $4$ times will also mean that the points end up in their original position. The only remaining possibility for monodromy to occur is for two consecutive changes of quadrant to both yield a change of direction/pairing. 

This leads us to the following definition:
\begin{definition}[Monodromy critical $A_1^2/A_1^2$ singularities]
We call a singularity of $A_1^2/A_1^2$ monodromy critical if there are two consecutive changes in the pairing of critical points (induced by the elder rule), if the loop encircles an $A_1^2/A_1^2$ singularity.
\end{definition}

\begin{figure}[h!]
    \centering
    \includegraphics[width=0.99\linewidth]{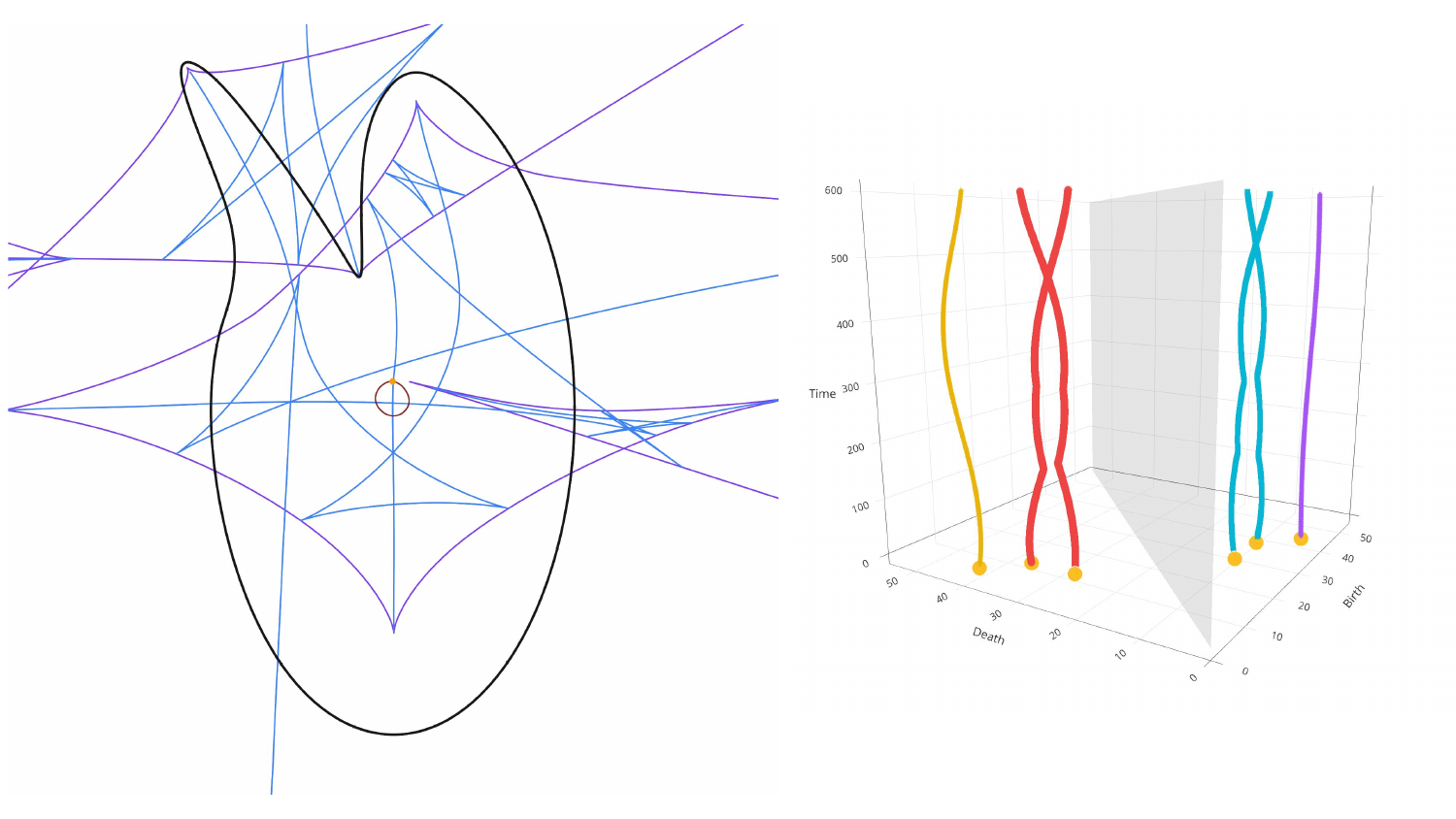}
    \caption{Left: The complete symmetry (blue) and focal (purple) sets for the example where a single singularity of type $A_1^2/A_1^2$ generates monodromy. Right: The vineyard exhibiting the monodromy. We stress that the red and turquoise vines do not intersect in three dimensions (the intersection is a result of the projection to the two dimensional paper). 
    }
    \label{fig:figure3ForA12A12Transition} 
\end{figure}

Let's now discuss the example given in Figures \ref{fig:figure2ForA12A12Transition} and \ref{fig:figure3ForA12A12Transition}. The first birth occurs at $b^0$ and the final death at $D^{III}$ (again we ignore minor shifts in the contact points as we move along $\gamma$). Thanks to the elder rule this means that 
all other critical points need to be paired with each other. 
We go through the changes per quadrant: 
\begin{itemize} 
\item In $Q_\Alpha$, because $D^{I} <D^{II}$, the connected component created at $b^1$ merges with the one created at $b^0$ before $b^2$ merges with the resulting sublevel set as $D^{II}$, that is, $b^1$ is paired with $D^{I}$ and $b^2$ with $D^{II}$. 
\item In $Q_\Beta$, the pairing into a persistence point is the same as in $Q_\Alpha$: Because $D^{I} <D^{II}$, the connected component created at $b^1$ merges with the one created at $b^0$ before $b^2$ merges with with the resulting sublevel set as $D^{II}$, that is, $b^1$ is paired with $D^{I}$ and $b^2$ with $D^{II}$. 
\item Moving to $Q_\Gamma$ changes the situation: In $Q_\Gamma$, we have that $b^2 < b^1$ and $D^{II} <D^{I}$. Because $D^{II} <D^{I}$, the cycles born at $b^1$ and $b^2$ merge with each other before merging with $b^0$ at $D^I$. Since the cycle born at $b^2$ is older than the one born at $b^1$, we have the pairings $(b^2 ,D^{I})$ and $(b^1, D^{II})$. This means that there is a change of pairing compared to $Q_\Beta$. 
\item In $Q_\Delta$, we have that $b^1 < b^2$ and $D^{II} <D^{I}$. It is still true (compared to the situation in $Q_\Gamma$) that because $D^{II} <D^{I}$, the cycles born at $b^1$ and $b^2$ merge with each other before merging with $b^0$ at $D^I$. However, now the cycle born at $b^1$ is older than the one born at $b^2$ and the elder rule implies that $b^1$ is paired with $D^{I}$ This means that we have the pairings $(b^1 ,D^{I})$ and $(b^2, D^{II})$.
\end{itemize}
We stress that coming back from the delta quadrant to the alpha quadrant there is no change in pairing. 
In summary we have the following pairing:
\begin{center}
\begin{tabular}{ c |c  }
Quadrant  & pairing \\ 
\hline
$\Alpha$ & $(b^1 ,D^{I})$, $(b^2, D^{II})$ \\  
$\Beta$ & $(b^1 ,D^{I})$, $(b^2, D^{II})$
\\
$\Gamma$ & $(b^2 ,D^{I})$, $(b^1, D^{II})$
\\
$\Delta$ & $(b^1 ,D^{I})$, $(b^2, D^{II})$
\end{tabular}
\end{center}
This means that following the vines induces an interchange of the points in the persistence diagram; the points themselves as a set in $\mathbb{R}^2$ vary continuously, but following a single point across time yields an interchange. 




\begin{remark} The example in this section can be easily extended to arbitrary dimensions by taking an offset creating a tubular neighbourhood of the curve of `torus' $\mathbb{S}^1 \times \mathbb{S}^l$.
\end{remark}

\begin{remark} 
In our description we have focused on ordinary persistence, or the part of the extended persistence diagram that is associated with the `up' direction of the extended persistence. Due to symmetry, the behaviour of these vines (which lie above the diagonal) is reflected below the diagonal for the extended persistence diagrams~\cite{CohenSteiner2008,Turner2024}. We depict the vineyards in extended persistence diagrams in Figures \ref{fig:figure3ForA12A12Transition}, \ref{fig:A_13 Singularity}, \ref{fig:A_2 Singularity}, \ref{fig:A_1A_2 Singularity}, \ref{fig:A_3 Singularity}, \ref{fig:A_2/A_2 Singularity}, \ref{fig:A12/A_2 Singularity}, and  \ref{fig:a12a12_bigloop}. These figures were made using the Bouquet2D tool \cite{Bouquet}. 
\end{remark} 

\begin{remark} 
The spiral, which was the example in which monodromy was shown by Arya et al. \cite{Arya2024} can be deformed in such a way that a type $A_1^2/A_1^2$ occurs in the centre and the pairing of the births and deaths is identical to the one given in the table above. This is explained pictorially in Figure \ref{fig:ellipse_spiral}; note that this example depicts a manifold with boundary, which formally violates the assumptions. However, a small generic offset (with slightly non-uniform distance to preserve genericity) produces a manifold whose singular structure is arbitrarily close to—and contains—the structure shown here. This offset manifold exhibits a “doubling” phenomenon: each branch of the generalized symmetry set appears once for the interior and once for the exterior, together with additional structure near the endpoints. Since displaying the full structure makes the figure difficult to interpret, we use the simplified depiction shown here.
\end{remark}

\begin{figure}[h!]
\centering
\includegraphics[width=0.99\linewidth]{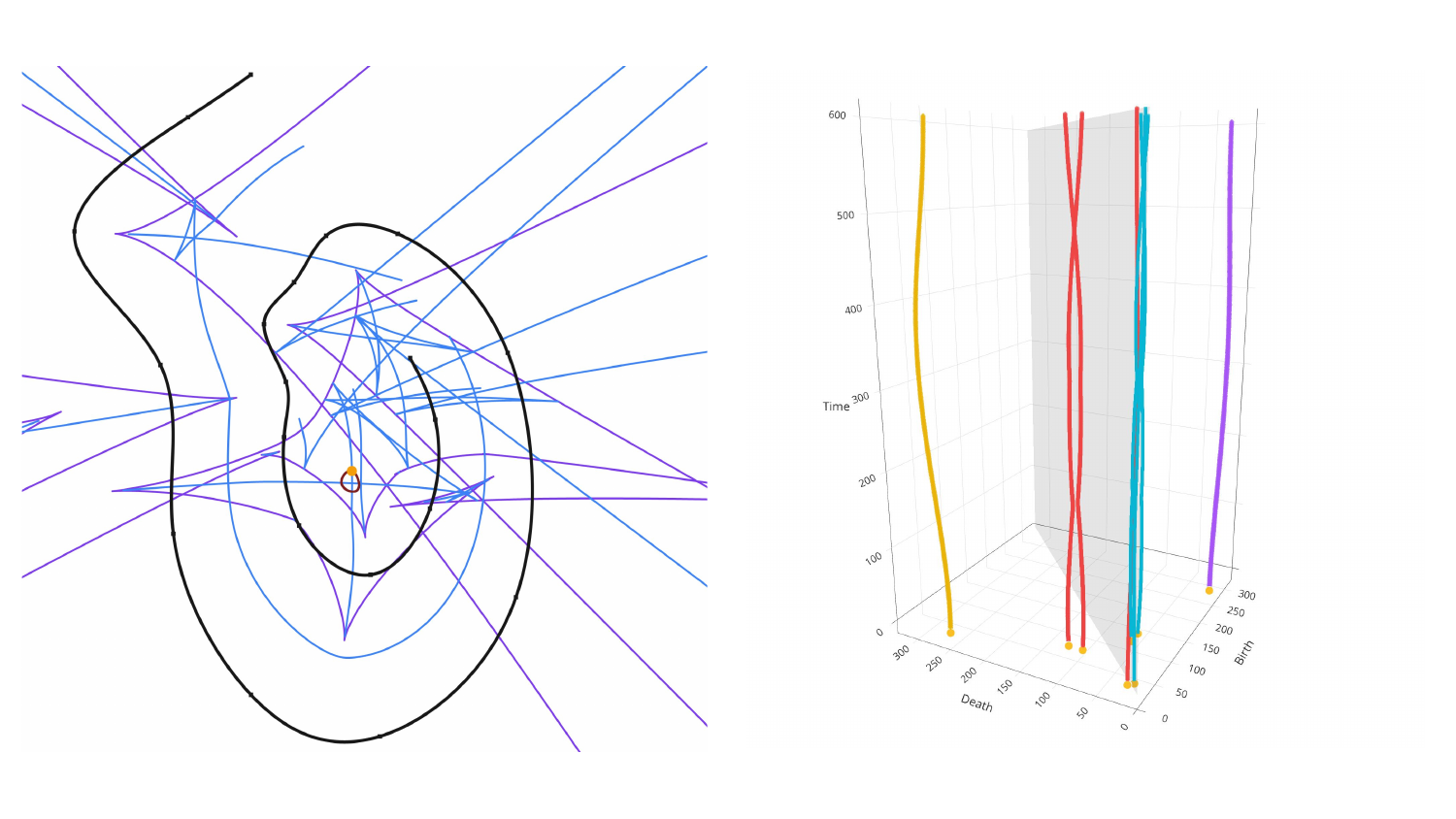}

\caption{Left: An example of a deformed (ellipsoidal) spiral inspired from the example by \cite{Arya2024}. The dark red observation loop surrounding the singularity of type $A_1^2/A_1^2$ yields monodromy. Right: The vineyard demonstrating monodromy along the loop containing the relevant $A_1^2/A_1^2$ singularity. }
\label{fig:ellipse_spiral}
\end{figure}



\subsection{Type \texorpdfstring{$A_1^3$}{A13}}


As we have seen in Section \ref{sec:Singularities_Plane}, if $p$ lies on a singularity of type $A_1^3$ on the symmetry set, this means that there is a circle centred at $p$ that is tangent to $\M$ at three different points, i.e.\ there are three contact points. Moreover, there is no other circle centred at $p$ that has either additional tangencies, or higher order ($A_2$ or $A_3$) contact points. The three contact points on the circle can, in general, correspond to either births or deaths. We denote birth events as $b^i$ and death events by $D^J$. By Observation~\ref{ob:$A_1^2$ intersection} and Corollary~\ref{cor:interchange}, any nontrivial interchange along a loop must intersect two distinct strata of the symmetry set (one birth–birth and one death–death) and therefore involve at least four contact points associated to exactly two cycles. An $A_1^3$ neighbourhood, however, supplies only three simultaneous $A_1$ contacts on a single circle (hence at most three involved contact points), so the necessary condition is not met and no nontrivial monodromy can occur.
To see that there is no non-trivial topology in the vineyard we have to distinguish a number of cases. If the contact points on the circle correspond to either all births or all deaths, then there is no linkage due to Observation \ref{Obs:NoLinkingSymmetry}. So let us consider the case where we have two deaths and one birth. The other case (with two births and one death) is symmetric. We'll denote these two deaths by $D^I$ and $D^{II}$ and the birth by $b^3$. Clearly, because we are away from the focal set, $b^3$ cannot be paired in to a persistence pair with $D^{I}$ or $D^{II}$, therefore there exist $b^1,b^2<D^{I}, D^{II}$ and $b^3<D^{III}$, such that $b^1$, $b^2$ are paired with $D^{I}$ or $D^{II}$ and $b^3$ with $D^{III}$. The former pairing does not have to be stable; depending on the elder rule the pairing can switch upon crossing the symmetry set. Thanks to Corollary \ref{cor:SmallLoops} we can assume that the loop $\gamma$ is arbitrary small and in particular much smaller than the difference between any of the values of births $b^j$ and deaths $D^J$ (where we mean both differences between pairs or births/deaths and differences between a birth and a death), with the exception of $D^I$, $D^{II}$, and $b^3$. However, the births $b^1$ and $b^2$ are distinct, and the same holds for $D^{III}$. This means that in the persistence diagram $\mathcal{V}(\tau)$, where $\tau$ is any fixed time, the distance between the points in the persistence diagrams is lower bounded. On the other hand, because $\gamma$ can be chosen arbitrarily small, and, thanks to the stability of persistence diagrams, the persistence points in $\mathcal{V}(t)$ can be assumed to be arbitrarily close to the points in $\mathcal{V}(\tau)$. 
Hence, we can conclude that the vines are unlinked. 
See Figure \ref{fig:A13PersistenceDiagram} for a sketch.

\begin{figure}[h!]
\centering
\includegraphics[width=0.5\linewidth]{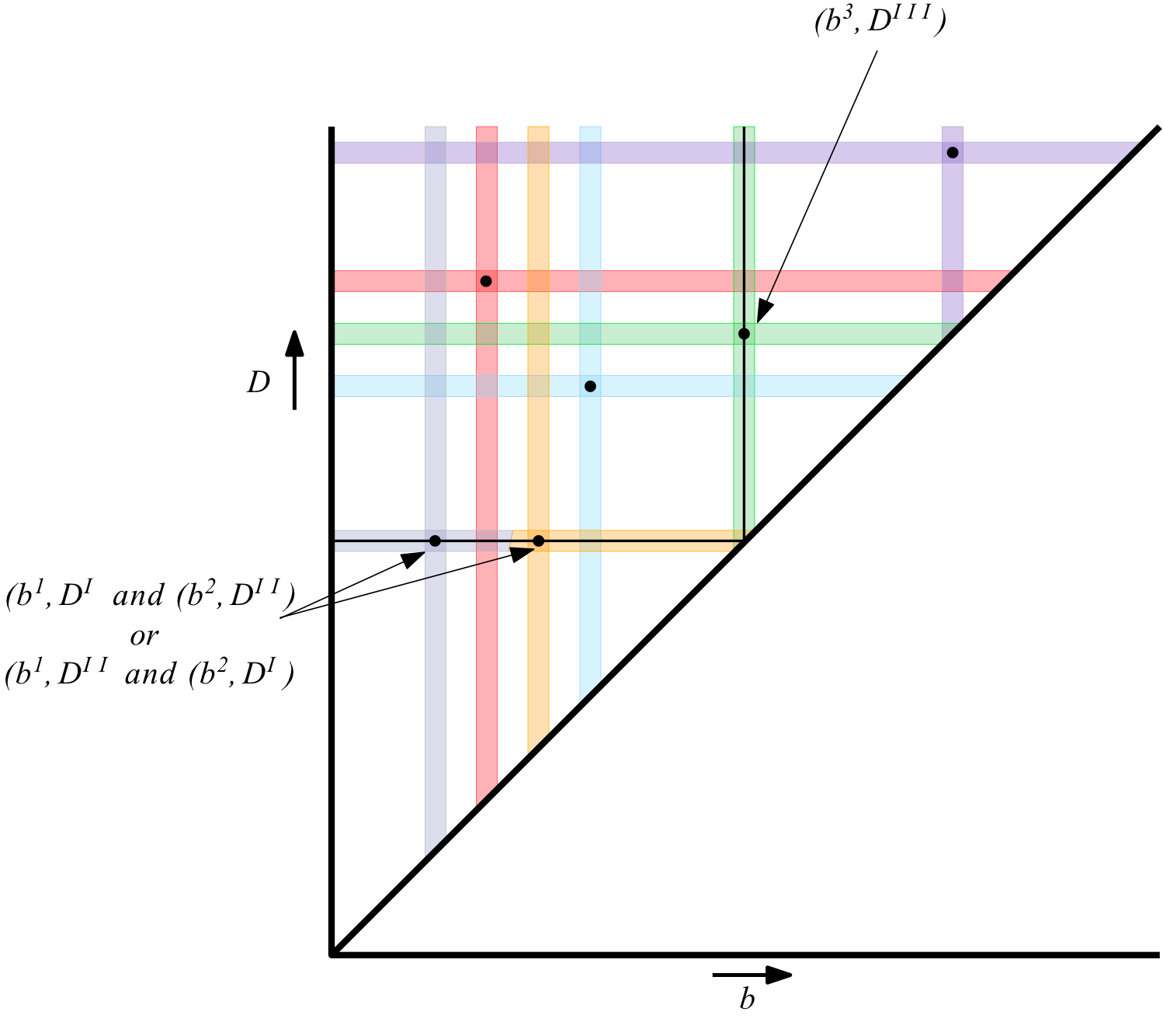}
\caption{Each point in the persistence diagram has a zone around it (either in the birth direction or the death direction and usually both) which can be assumed larger than the movement of the point in the persistence diagram. Note that we show only the above diagonal portion of the persistence diagram, for simplicity, but the portion of the diagram below the diagonal will mirror above for a manifold without boundary. 
}
\label{fig:A13PersistenceDiagram}
\end{figure}

We refer to Figure \ref{fig:A_13 Singularity}  for an illustration of the configuration and vineyard.

\begin{figure}[h!]
\centering
\includegraphics[width=0.99\linewidth]{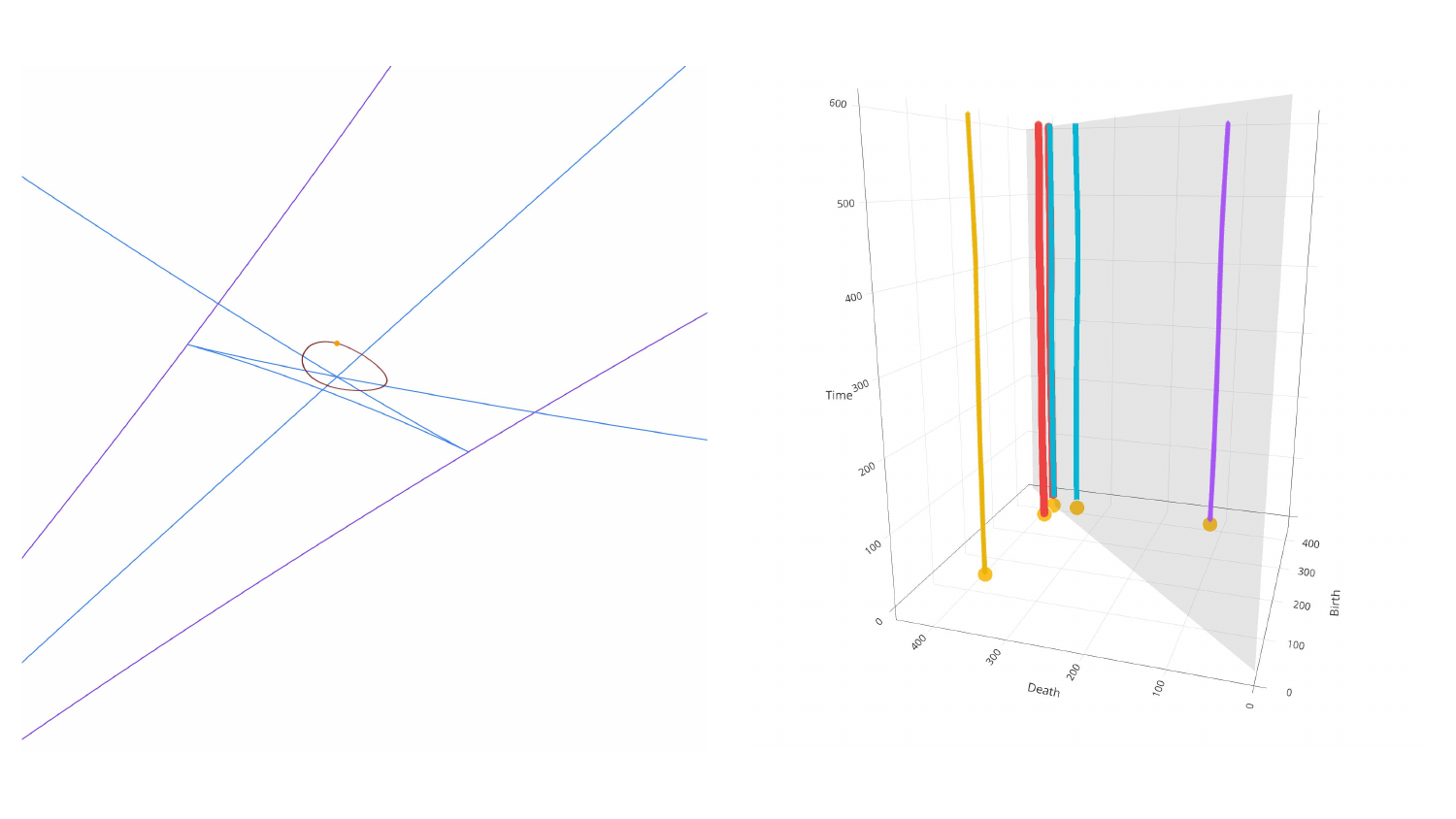}
\caption{A loop $\gamma$ enclosing an $A_1^3$ singularity (left)  and the associated vineyard (right). }
\label{fig:A_13 Singularity}
\end{figure}

\subsection{Type \texorpdfstring{$A_2$}{A2}} 
As discussed in Section \ref{sec:Singularities_Plane}, if $p$ lies on a singularity of type $A_2$, then in particular $p$ lies on a manifold piece of the focal set. As a result, for the squared distance $d_\mathbb{E} (\cdot,p)^2 |_\M$, there is a \emph{second–order} tangency at one contact point (osculating circle) and $d_\mathbb{E}$ has a \emph{degenerate} critical point there.

Fix a small neighbourhood around $p$ that contains no other critical points, and label the two sides of the focal set in that neighbourhood by $Q_\Alpha$ and $Q_\Beta$. 
Consider a small smooth loop $\gamma(t)$, around the singular point $p$, in the neighbourhood of $p$ that crosses the focal set transversally and passes alternately through $Q_\Alpha$ and $Q_\Beta$ (see Figure \ref{fig:A_2 Singularity}).
As $\gamma(t)$ crosses the focal set, the following phenomenon occurs for the vineyard: 

\begin{figure}[h]
\centering
\includegraphics[width=0.99\linewidth]{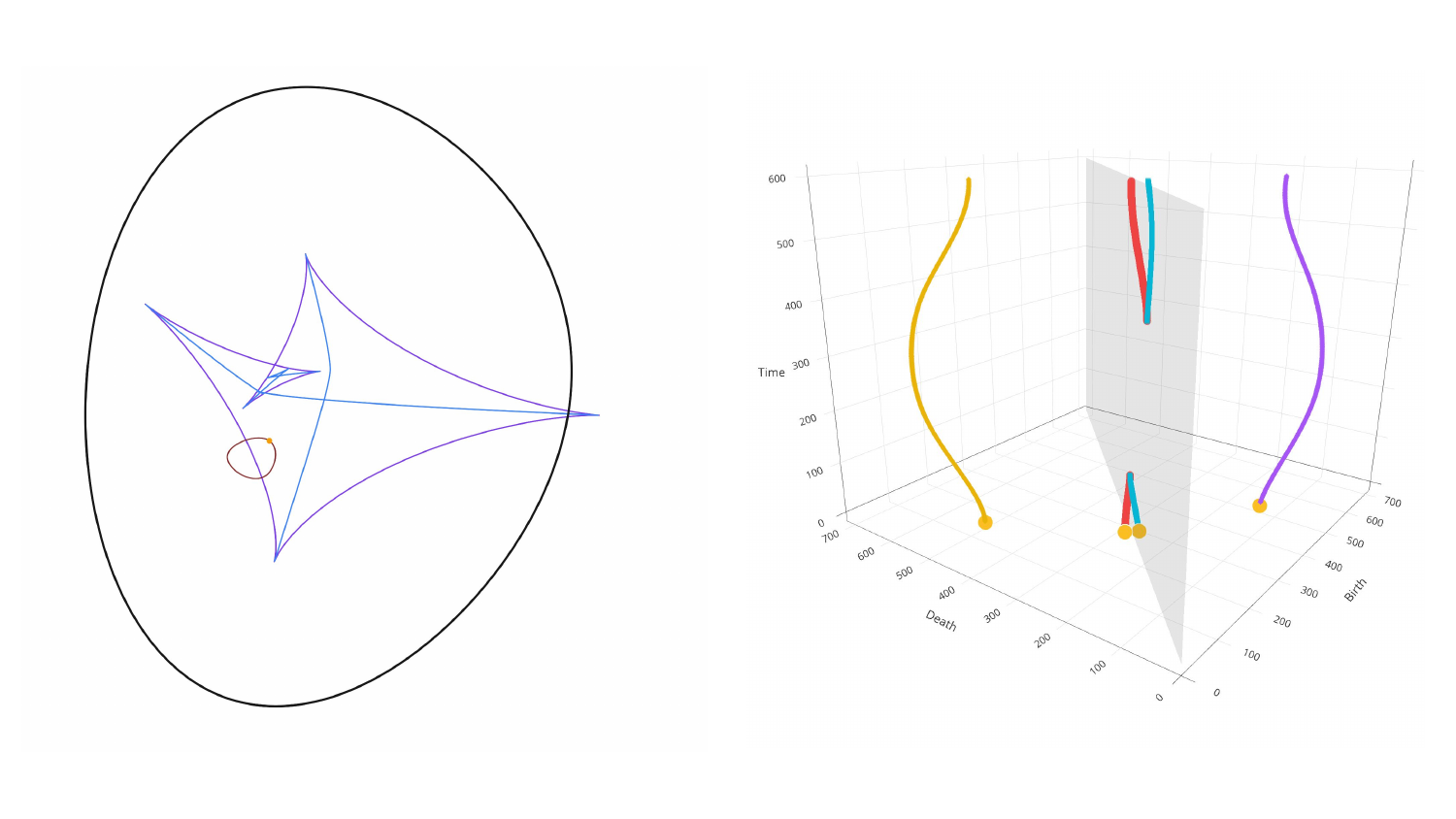}
\caption{A loop enclosing the focal set or a singularity of type $A_2$ (left)  and the associated vineyard (right). } 
\label{fig:A_2 Singularity}
\end{figure}

\begin{itemize}
\item \textbf{Persistence pair creation/annihilation.} 
When $\gamma(t)$ passes from $Q_\Alpha$ into $Q_\Beta$ across the focal set, a single new pair of simple critical points \emph{appears} locally. We denote this pair by $(b^1, D^{I})$ with $b^1<D^{I}$. This pair is \emph{self–paired}: as sublevel sets grow, a new feature appears at level $b^1$ and disappears again at level $D^{I}$ without interacting with pre–existing features, see Observation \ref{obs:Isolation}. Moreover, the difference $D^{I}-b^1$ is very small immediately after the crossing, so the corresponding point in the diagram lies very close to the diagonal. We refer to this phenomenon as the \emph{persistence pair creation}
\newline
When $\gamma(t)$ passes from $Q_\Beta$ back into $Q_\Alpha$ across the focal set, the same local pair \emph{disappears}: before the crossing the near–diagonal pair $(b^1, D^{I})$ is present, and after the crossing, it is absent. We refer to this phenomenon as the \emph{persistence pair annihilation}.
\end{itemize}

We make one additional observation. In both passages, transversality ensures that exactly one such local pair is created or annihilated and that all other critical values vary continuously and remain distinct. In particular, these crossings do not force reordering or repairing among distinct, pre–existing off–diagonal points. Because the newly created pair interacts only with itself, the elder rule does not come into effect for any pre–existing classes at these events. 

Next, along the loop $\gamma(t)$, there are exactly two transverse intersections with the focal set, one creating and one annihilating the near–diagonal pair $(b^1, D^{I})$. Hence, after one loop along the curve $\gamma(t)$, the vineyard returns to its initial state. In particular, there is no interchange between distinct off–diagonal points along such a loop, and hence no monodromy arises from circling an isolated $A_2$ singularity.
There is no linkage between the vines thanks to Observation \ref{obs:Isolation}. 


\subsection{Type \texorpdfstring{$A_1A_2$}{A1A2}}
  There is one contact point of order one and one of order two. In other words the curve (or more generally manifold) is tangent to a circle $C$ at a single point where the distance function is Morse, this corresponds to either a birth or a death (depending on the index of the Morse function). There is a contact of second order at a different point of the circle $C$, that means that the singularity is on the focal set. Moreover if we are on one side of the focal set, that is right and above in Figure \ref{fig:A_1A_2 Singularity}, then there is a single birth (because of the tangency at the circle), while if we cross the focal set there is a second birth and death which occur, that is, there is a persistence point that pops out of the diagonal. This point has very low persistence. If we cross the symmetry set, the order in which the births and deaths occur can change, however, the point with very low persistence cannot interchange with any other points in the persistence diagram, because the persistence is so low compared to all the others. Hence there is no monodromy, when going around this singularity. There is no linkage between the vines due to Observations \ref{obs:Isolation} and \ref{Obs:NoLinkingSymmetry}.

\begin{figure}[h!]
\centering
\includegraphics[width=0.8\linewidth]{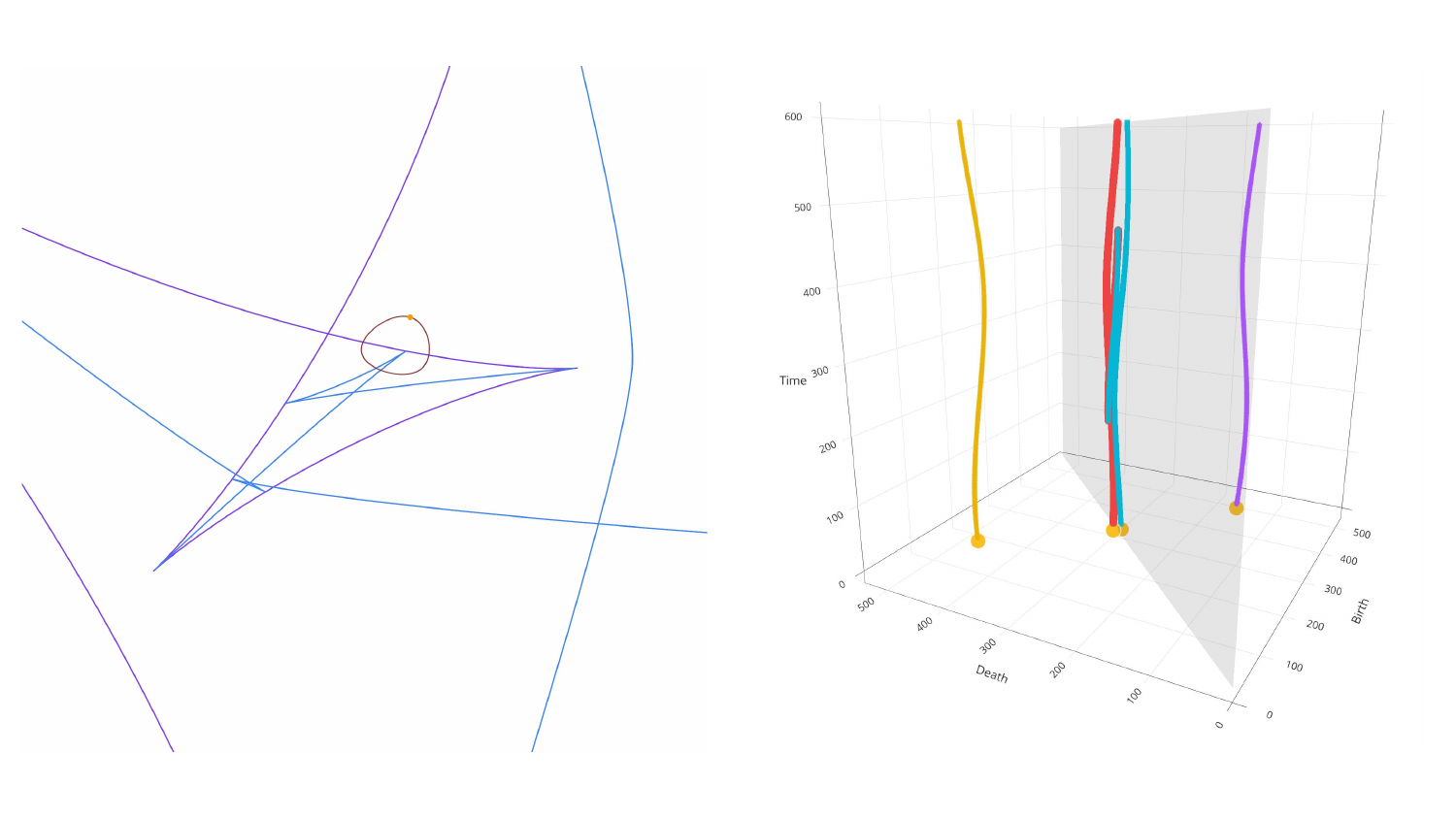}

\caption{
A loop enclosing the focal set or a singularity of type $A_1A_2$ (left) and the associated vineyard (right). 
}
\label{fig:A_1A_2 Singularity}
\end{figure}


\subsection{Type \texorpdfstring{$A_3$}{A3}} 
For this singularity, there is a maximum or minimum of curvature (where we consider the absolute value of the curvature, or non-signed curvature). There is a contact point of order 3 on the circle. A good example for the maximum of curvature is the tip of the medial axis for the ellipse (the analysis for the minimum of curvature is similar). We once more consider a $\gamma$, which encircles the point $p$ where the singularity occurs. As we go around the singularity there starting at the bottom in Figure \ref{fig:A_3 Singularity}, that is, starting at a point on the line segment connecting $p$ and the point of maximal curvature on $\M$, we first find a single closest point near the point of maximal curvature on $\M$, that is, a single birth near that point (for a minimum of curvature this is a single death in that neighbourhood). As we cross the focal set, a second birth appears as well as a death, that is, a persistence point pops out of the diagonal. As $\gamma$ crosses the symmetry set, the order of the two births interchanges, and because the point where death occurs connects the two connected components created by the births, there is an elder rule effect. As $\gamma$ crosses the focal set for the second time, the point with low persistence disappears again into the diagonal. Because the persistence of this point is so very low compared to the other points in the persistence diagram, it does not yield monodromy or topology of the vineyard. 

\begin{figure}[h!]
\centering
\includegraphics[width=0.99\linewidth]{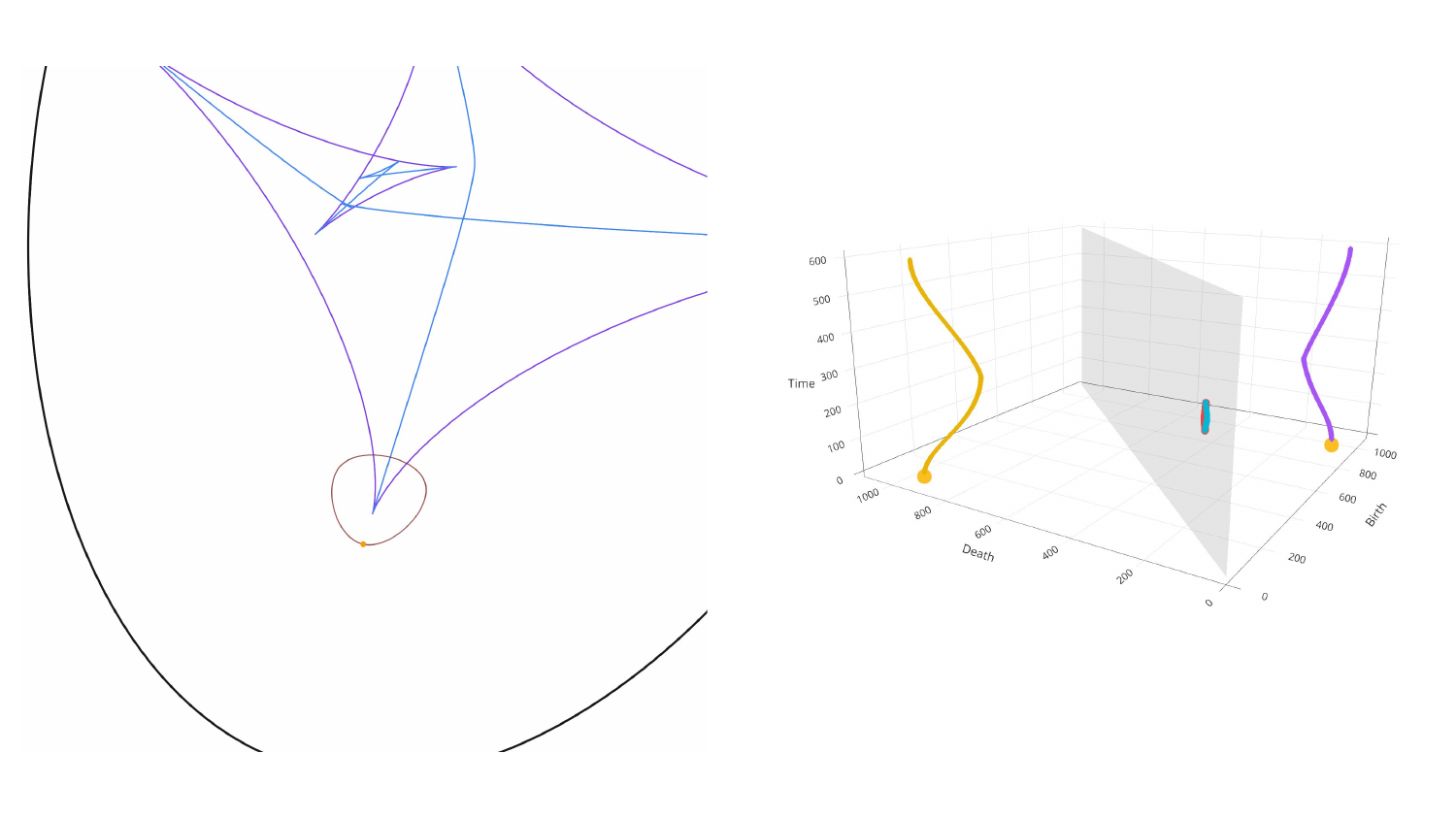}

\caption{A loop enclosing the focal set or a singularity of type $A_3$ (left) and the associated vineyard (right). 
}
\label{fig:A_3 Singularity}
\end{figure}


\subsection{Type \texorpdfstring{$A_2/A_2$}{A2A2}}
Two distinct second–order tangencies occur at the same centre $u$ but at different radii, that is there are two concentric osculating circles $C(u,r_1)$ and $C(u,r_2)$. Locally, two manifold pieces (arcs) $E_1$, $E_2$ of the focal set intersect transversely. The complement of the focal set in a neighbourhood of $u$ consists of four connected components or quadrants, denoted by $Q_\Alpha$, $Q_\Beta$, $Q_\Gamma$, and $Q_\Delta$ (oriented counter clockwise) such that $E_1$ separates $Q_\Alpha$ from $Q_\Delta$ as well as $Q_\Beta$ from $Q_\Gamma$ and $E_2$ separates $Q_\Alpha$ from $Q_\Beta$ as well as $Q_\Gamma$ from $Q_\Delta$ and $E_2$. We will assume without loss of generality that if $\eta$ is a curve crossing $E_1$ transversally from $Q_\Delta$ to $Q_\Alpha$ or from $Q_\Gamma$ to $Q_\Delta$ then there is a persistence pair creation. Similarly, we assume without loss of generality that if $\eta$ is a curve crossing $E_2$ transversely from $Q_\Alpha$ to $Q_\Beta$ or $Q_\Delta$ to $Q_\Gamma$ then there is a persistence pair creation.  
Because $r_1 \neq r_2$, we have that near $u$, the births and deaths (which are almost equal) of each persistence pair are different, even if both of them have low persistence. {See Figure \ref{fig:A_2/A_2 Singularity}}

Let us consider a small circle $\eta(t)$ that encircles the centre $u$ counter clockwise and radius $\rho$, such that $\rho$ is small compared to $r_1$, $r_2$ and $r_1,r_2$, the curvature of the two manifold pieces of the focal set and the angle between $E_1$ and $E_2$. We will follow $\eta$ starting in the delta quadrant. Our previous discussion now shows that as we move into the alpha quadrant, a point with both birth and death time close to $r_1$ emerges in the persistence diagram, or in the vineyard a vine emanates from the diagonal at birth/death $r_1$ upon crossing $E_1$. As we move into $Q_\Beta$ we have a point with both birth and death time close to $r_2$ that emerges in the persistence diagram, or in the vineyard, a vine emanates from the diagonal at birth/death $r_2$ upon crossing $E_2$. Upon moving into $Q_\Gamma$ the persistence pair near $r_1$ disappears or in the vineyard the vine that was near $(r_1,r_1)$ merges back into the diagonal. Returning to the delta quadrant, the same happens for $r_2$, that is, the vine near $(r_2,r_2)$ merges back into the diagonal. 

Because there are no other parts of the symmetry or focal set in the neighbourhood by assumption, the other points in the persistence diagram move a little, but there are no other changes in the combinatorics of the persistence diagrams. At the level of the vineyard the other vines are almost vertical and form topological circles in the closed vineyard. In particular, there is no monodromy, because the two vines appearing and disappearing from/into the diagonal do so at very distinct death/birth values, which remain almost constant owing to the small size of $\eta$. There is also no linkage between the vines due to Observation \ref{obs:Isolation}. 

\begin{figure}[h!]
\centering
\includegraphics[width=0.90\linewidth]{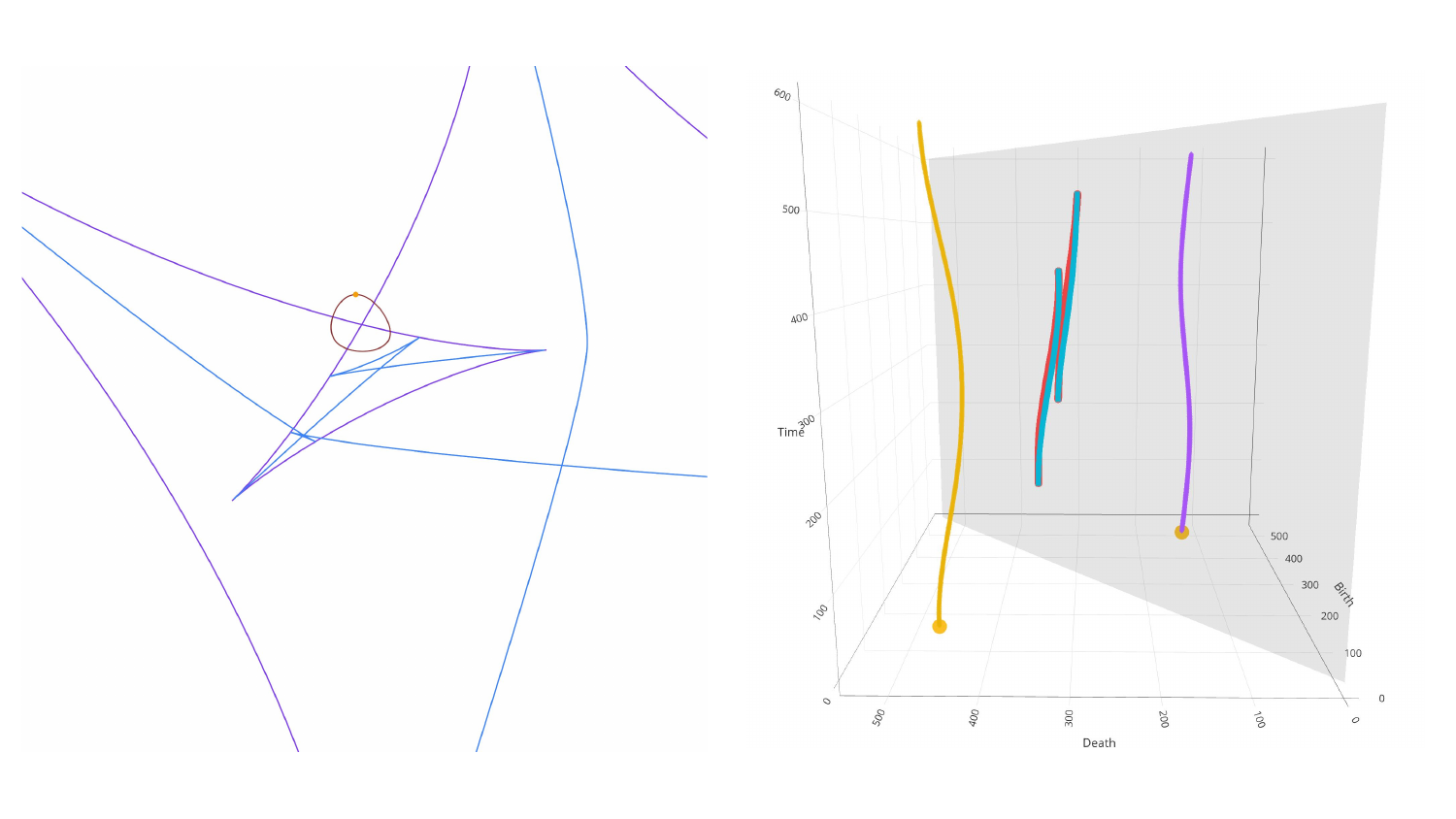}

\caption{
A loop enclosing the focal set or a singularity of type $A_2/A_2$  and the associated vineyard. 
}
\label{fig:A_2/A_2 Singularity}
\end{figure}

\subsection{Type \texorpdfstring{$A_1^2/A_2$}{A12A2}}
This is a common intersection between the focal and the symmetry set. That is there are two circles, which are concentric (but with different radii). We again consider a loop $\gamma$ enclosing the singularity. Similar to the analysis we have seen before, when crossing the focal set, a point comes out of the diagonal of the persistence diagram, but this point has very low persistence. Moreover, there can only be two births or two deaths, or one birth and one death, but not the required 2 births and 2 deaths. This means that no monodromy is generated locally at this singularity.  There is no linkage between the vines due to Observations \ref{obs:Isolation} and \ref{Obs:NoLinkingSymmetry}. 
{See Figure \ref{fig:A12/A_2 Singularity}}

\begin{figure}[h!]
\centering
\includegraphics[width=0.99\linewidth]{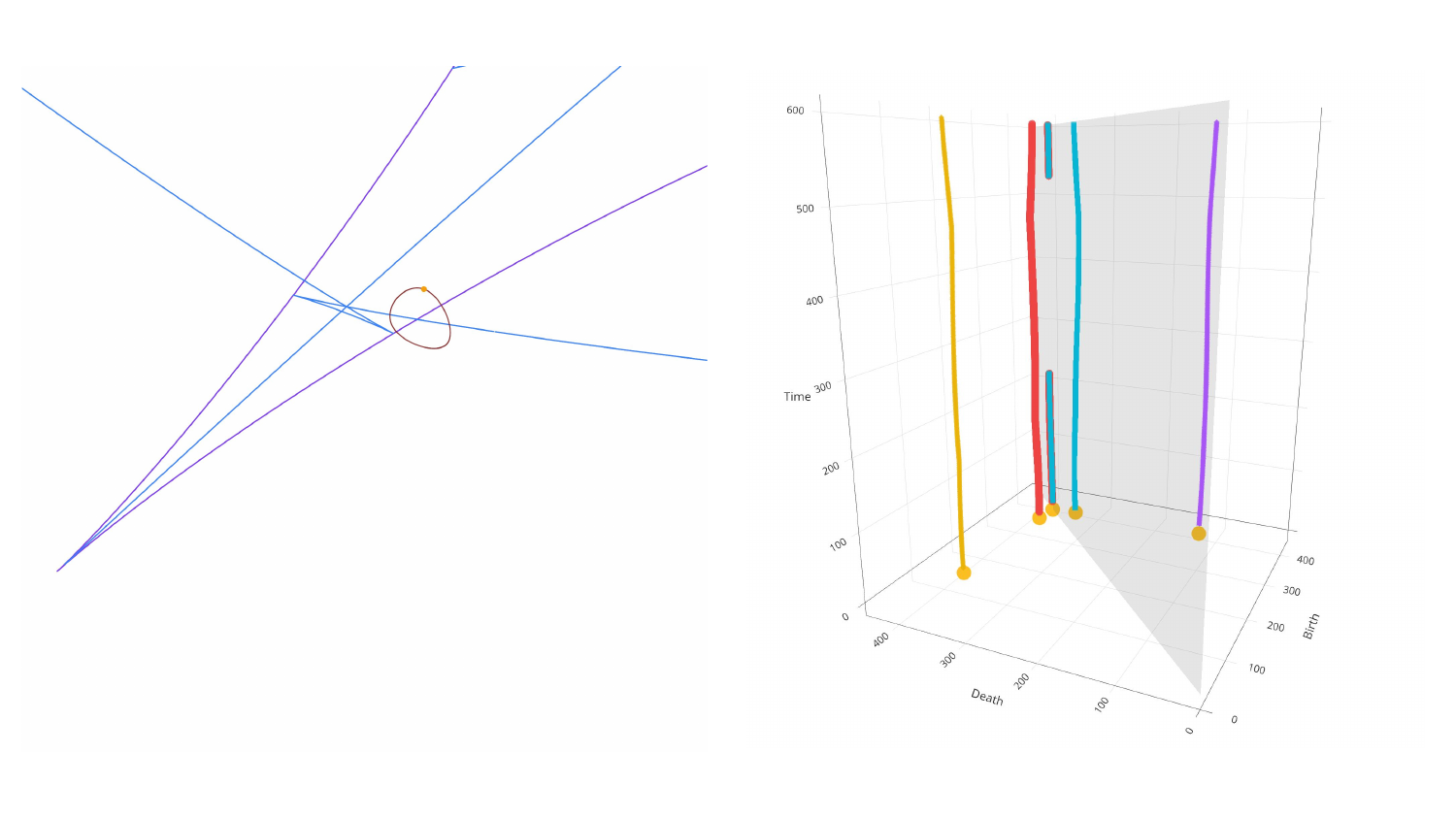}

\caption{Schematic of an $A_1^2/A_2$ singularity}
\label{fig:A12/A_2 Singularity}
\end{figure}

\subsection{Main result}
From the analysis above for all the singularities we can immediately conclude our main result:
\mainthm*

\section{Conclusions and open questions}

In this paper, we have given a local criterion which indicates the existence of monodromy for small loops enclosing a single singularity. A clear next step, which we defer to the future, is to develop more global criteria for the existence of monodromy and topology in vineyards in 2D. 

However, we stress that the converse of our main theorem is not correct. More precisely, we have the following: Consider the $1$-manifold $\M$ we considered in Section \ref{sec:A12A12}, i.e. for the $A_1^2/A_1^2$ singularity. Now choose a large loop $\gamma$ as indicated in Figure \ref{fig:a12a12_bigloop}, instead of the small loop encircling the singularity. 
Then the vineyard for this loop does not exhibit braiding or monodromy, even if there is a singularity inside that locally generates monodromy. 

\begin{figure}[h!]
    \centering
    \includegraphics[width=1\linewidth]{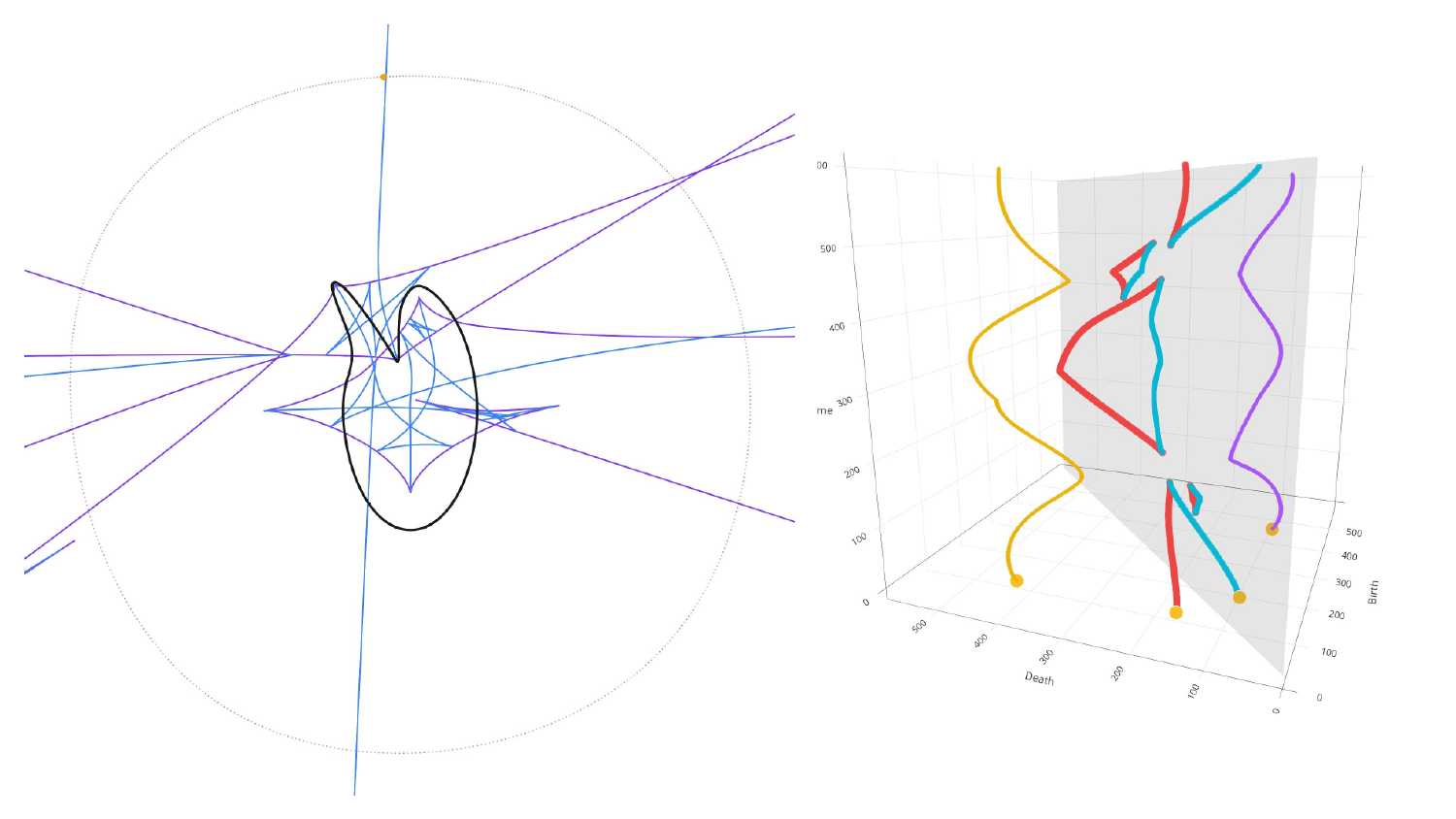}
    \caption{When $\gamma$ is a big enough loop enclosing the curve which acted as an example for a singularity of type $A_1^2/A_1^2$ which contains monodromy, then the vineyard of the larger loop no longer contains monodromy, even though it contains the $A_1^2/A_1^2$ singularity which locally generates monodromy.}
    \label{fig:a12a12_bigloop}
\end{figure}

This indicates that the way in which singularities of the generalized symmetry set interact to create the topology of the vineyard is very intricate. More precisely, a construction of the following nature does not seem straightforward to the authors: Suppose that we want monodromy of order $k$ in the vineyard. Then one would say with the information above, let us have $k$ singularities of type $A_1^2/A_1^2$ in the symmetry set. But then it is not clear how the manifold $\M$ should look so that these singularities indeed together give rise to order $k$ monodromy, let alone that we could construct all such manifolds. Specific examples were constructed by Arya et al. \cite{Arya2024} in 2D and Chambers et al. \cite{Chambers2026} in arbitrary dimensions without using singularity theoretic techniques. 

In this paper we have restricted ourselves to two dimensions, a setting which as one has seen already exhibits a very rich structure. However, generalizing these results to arbitrary dimension is a clear goal, especially given the existence of classifications of singularities in 3-dimensions by Bruce,  Giblin, and Gibson~\cite{Bruce1985symmetry}.  In fact, we envision that the higher dimensional analysis will be highly dependant upon this 2-dimensional structural result, which we plan to exhibit in a future paper.

In \cite{depthposet2026,depthposetpaper2}, persistence diagrams were augmented to depth posets, which endow the points in the persistence diagram with a partial ordering, in order to better understand how different spaces are related to each other via persistence-motivated cancellations and simplifications.  One potential area of future exploration is to adapt the depth poset to the radial transform, in order to study the depth poset for a loop around a singularity.  We conjecture that this creates a more complex form of monodromy over the poset structure, which in turn might yield more detailed geometric and topological insights.

Finally, as noted earlier, this set of obstructions to monodromy opens up the exploration of algorithmic criteria for choosing vineyards that exhibit monodromy.  While symmetry sets are notoriously noisy objects, considerable work on computing and approximating them exists due to their utility in 2D and 3D shape analysis~\cite{Giblin2002, Giblin2003,Kuijper2004,Kuijper2004a,Diatta2005}.  It remains an interesting question to determine the computational complexity of exact algorithms to compute loops containing monodromy for reasonable classes of input shapes, as well as determining its utility in broader data analysis pipelines.

\subsection*{Acknowledgements}
We thank Ellen Gasparovich and Andr{\'e} Lieutier for discussion.

\phantomsection
\addcontentsline{toc}{section}{Bibliography}
\bibliography{bib}


\appendix
\clearpage

\end{document}